\newtheorem{Theorem}{Theorem}
\newtheorem{Remark}{Remark}
\newtheorem{Example}{Example}
\begin{document}
%
\title{Private and Secure Distributed Matrix Multiplication with Flexible Communication Load}
%
%
%

\author{\IEEEauthorblockN{\textbf{Malihe Aliasgari}\IEEEauthorrefmark{1},
\textbf{Osvaldo Simeone}\IEEEauthorrefmark{2}
and \textbf{J\"org Kliewer}\IEEEauthorrefmark{1}} \\
\IEEEauthorrefmark{1} Department of Electrical and Computer
Engineering, New Jersey Institute of Technology,  Newark, NJ, U.S.A.\\
\IEEEauthorrefmark{2} Department of Engineering, 
King's College London, Department of Engineering,  London, U.K.
\thanks{This work was supported in part by 
the European Research Council (ERC) under
the European Union Horizon 2020 research and innovative programme (grant
agreement No 725731)
and by U.S. NSF grants CNS-1526547, CCF-1525629.

Part of this paper was presented at the IEEE International Symposium on Information Theory (ISIT), 2019 \cite{aliasgari2019distributed}.
}
\thanks{M. Aliasgari and J. Kliewer are with Helen and John C. Hartmann  Department of Electrical and Computer Engineering, New Jersey Institute of Technology, Newark, New Jersey, USA (email: ma839@njit.edu; jkliewer@njit.edu).}
\thanks{O. Simeone is with King's Centre for Learning \& Information Processing (kclip), Centre for Telecommunication Research (CTR), the Department of Engineering, King's College London, London, UK (email: osvaldo.simeone@kcl.ac.uk).}
}

\maketitle

\begin{abstract} 
Large matrix multiplications are central to large-scale machine learning applications. These operations are often carried out on a distributed computing platform with a master server and multiple workers in the cloud operating in parallel. For such distributed platforms, it has been recently shown that coding over the input data matrices can reduce the computational delay, yielding a trade-off between recovery threshold, i.e., the number of workers required to recover the matrix product, and communication load, i.e., the total amount of data to be downloaded from the workers. In this paper, in addition to exact recovery requirements, we impose security and privacy constraints on the data matrices, and study the recovery threshold as a function of the communication load. We first assume that both matrices contain private information and that workers can collude to eavesdrop on the content of these data matrices. For this problem, we introduce  a novel class of secure codes, referred to as secure generalized PolyDot (SGPD) codes, that generalize state-of-the-art non-secure codes for matrix multiplication. SGPD codes allow a flexible trade-off between recovery threshold and communication load for a fixed maximum number of colluding workers while providing perfect secrecy for the two data matrices. 
We then study a connection between secure matrix multiplication and private information retrieval. 
We specifically assume that one of the data matrices is taken from a public set known to all the workers. In this setup, the identity of the matrix of interest should be kept private from the workers. For this model, we present a variant of generalized PolyDot codes that can guarantee both secrecy of one matrix and privacy for the identity of the other matrix for the case of no colluding servers.
\end{abstract}
 
\begin{IEEEkeywords}
Coded distributed computation, distributed learning, secret sharing,
information theoretic security, private information retrieval.
\end{IEEEkeywords}

 \IEEEpeerreviewmaketitle

\section{Introduction}
\subsection{Motivation and Problem Definition}
At the core of many signal processing and machine learning applications are
tensor operations, most notably large matrix multiplications
\cite{janzamin2015beating}. In the presence of practically sized data sets,
such operations are typically carried out using distributed computing
platforms with a master server and multiple workers that can operate in
parallel over distinct parts of the data set. The master server plays the
role of the parameter server, distributing data to the workers and periodically reconciling their internal state \cite{li2014scaling}. Workers are commercial off-the-shelf servers that are characterized by possible temporary failures and delays \cite{dean2013tail}.

Straggling workers can affect the computation latency by orders of magnitude, e.g., \cite{joshi2017efficient,wang2015using}. While current distributed computing platforms conventionally handle straggling servers by means of replication of computing tasks \cite{huang1984algorithm}, recent work has shown that encoding the input data can help reduce the computation latency.
More generally, coding is able to  control the trade-off between
computational delay and communication load between workers and master server
\cite{lee2018speeding,yu2017polynomial,li2018fundamental,aliasgari2018coded,aliasgari2018codedisit,dutta2018optimal,dutta2018unified,fahim2017optimal,fahim2019numerically,subramaniam2019random}.
Furthermore, stochastic coding can help keeping both input and output data
secure from the workers, assuming that the latter are honest, i.e., carrying out the prescribed protocol, but curious \cite{nodehi2018limited,yu2018lagrange,chang2018capacity,kakar2019capacity,yang2019secure,rafael2018codes,das2019random,nodehi2019secure}. 
This paper contributes to this line of work by investigating the trade-off between computational delay and communication load as a function of the privacy level.

As illustrated in Figs.~\ref{FigComputSyst} and \ref{FigSyst}, we focus on the basic problem of computing a matrix multiplication $\mathbf{C}=\mathbf{AB}$ in a distributed computing system of $P$ workers that can process each only a fraction $1/m$ and $1/n$ of matrices $\mathbf{A}$ and $\mathbf{B}$, respectively.
In the first setup under study, illustrated in Fig.~\ref{FigComputSyst}, both matrices $\mathbf A$ and $\mathbf B$ are to be kept private from the workers. Here, three performance criteria are of interest: 
\begin{itemize}
	\item the recovery threshold $P_R$, that is, the number of workers that need to complete their task before the master server can recover the product $\mathbf{C}$;
	\item the communication load $C_L$ between workers and master server, i.e., the amount of information to be downloaded from the workers;
	\item the maximum number $P_C$ of colluding servers that ensures perfect secrecy for both data matrices $\mathbf{A}$ and $\mathbf{B}$.
\end{itemize}
In the second setup of interest shown in Fig.~\ref{FigSyst}, only matrix $\mathbf A$ is private, while matrix $\mathbf B$ is selected from a public data set $\mathcal B$. In this case, apart from the security constraint on $\mathbf A$, we only impose a privacy constraint on the identity of the specific matrix $\mathbf B\in \mathcal B$ of interest. 
As a motivation for this second setup, consider a recommender system based on collaborative filtering  \cite{ricci2011introduction}.
In this case, recommendations are based on the product of two matrices, one describing the profile of a user, or a group of users, and one representing features of the items of interest, such as movies, music or TV shows. The users' profile matrix can be modelled by the private matrix $\mathbf A$, hence ensuring the privacy of users' data; while the items' data matrix for each category is represented by one of the matrices in the public data set $\mathcal B=\{\mathbf B^{(k)}\}_{k=1}^L$. This latter assumption captures the constraint that users may want to keep the confidential types of items they are interested in. 
For this problem, the criteria of interest are still $P_R$ and $P_C$, and we simplify the problem by setting $P_C=1$. This paper focuses on the design of coding and computing techniques for both problems.

\subsection{Related Work}
In order to put our contribution in perspective, we briefly review prior related work.
Consider first solutions that provide no security guarantees, i.e.,
$P_C=0$, for the problem in Fig.~\ref{FigComputSyst}. As a direct extension of \cite{lee2018speeding}, a first approach
is to use product codes that apply separately the maximum  distance  separable (MDS) codes to encode the two
matrices \cite{lee2017high}. The recovery threshold of this scheme is
improved by \cite{yu2017polynomial}, which introduces \textit{polynomial codes}.
The construction in \cite{yu2017polynomial} is proved to be optimal under the assumption that \textit{minimal communication} is allowed between workers and master server. In \cite{fahim2017optimal}, MatDot codes are introduced, resulting in a lower recovery threshold at the expense of a larger communication load. The construction in \cite{dutta2018optimal} bridges the gap between polynomial and MatDot codes and presents PolyDot codes, yielding a trade-off between recovery threshold and communication load. An extension of this scheme, termed Generalized PolyDot
(GPD) codes improves on the recovery threshold of PolyDot codes
\cite{dutta2018unified}, which is independently obtained also by the
construction~in~\cite{yu2018straggler}. In [14], GPD codes are used to design a unified coded computing strategy for the training of deep neural networks.

Much less work has been done in the
literature for the case in which security constraints are factored in, i.e., where $P_C\ne 0$, for the problem of Fig.~\ref{FigComputSyst}. In
\cite{yu2018lagrange}, Lagrange coding is presented that achieves the
minimum recovery threshold for multilinear functions by generalizing MatDot codes. In \cite{nodehi2018limited,nodehi2019secure}, coded schemes have been used to develop multi-party computation
techniques to calculate arbitrary polynomials of massive matrices, preserving the security of the data matrices.
In \cite{chang2018capacity,kakar2019capacity,rafael2018codes} a reduction
of the communication load is obtained by extending polynomial codes. While
these works focus on either minimizing recovery threshold or communication
load, the \emph{trade-off} between these two fundamental quantities has not been addressed in the open literature to the best of our knowledge. A new class of secure distributed matrix multiplication and its capacity is studied in \cite{jia2019capacity}.

In the second part of this work, we study a connection between secure matrix multiplication and private information retrieval (PIR), as illustrated in Fig.~\ref{FigSyst}. The PIR problem was introduced in \cite{chor1995private} and has been widely studied in recent years, e.g., in \cite{gasarch2004survey,yekhanin2010private,sun2017capacity,banawan2018capacity,Hollanti_etal_2017,kazemi2019single,kazemi2019private,kim2019private,chang2019upload,tahmasebi2019private}.
In \cite{kim2019private} and \cite{chang2019upload} the PIR setup was investigated for the problem of distributed matrix multiplication  illustrated in Fig.~\ref{FigSyst} that imposes PIR guarantees for the index of matrix $\mathbf B$ within a public library. In \cite{kim2019private}, a coding strategy is proposed that combines the PIR scheme for non-colluding servers (i.e., with $P_C=1$) \cite{chor1995private} with polynomial codes \cite{yu2017polynomial}. In \cite{chang2019upload}, the authors introduce a related approach for this problem, and show that it outperforms the scheme proposed in \cite{kim2019private} in terms of upload and download cost. The code design in \cite{chang2019upload} focuses on the minimization of the communication load, and does not explore the trade-off between this metric and the recovery threshold.   

\subsection{Main Contribution}
In this paper we first present a novel class of secure computation codes, referred to as secure GPD (SGPD) codes, for the setup in Fig.~\ref{FigComputSyst}, SGPD codes generalize GPD codes to operate at a flexible communication load level.  This yields a new achievable trade-off between
recovery threshold $P_R$ and communication load $C_L$ as a function of a prescribed number of colluding workers $P_C$. 
In the process, we also introduce a novel perspective on distributed computing codes based on the signal processing concepts of convolution and $z$-transform. 
SGPD codes were first introduced in the conference version of this paper \cite{aliasgari2019distributed}, which did not contain complete proofs and provided only limited illustrations and examples.
Then, SGPD codes are modified to offer a solution, introduced here for the first time, for the scenario in Fig.~\ref{FigSyst}. This is done through concatenation with the PIR code in \cite{kim2019private}, which ensures both secrecy of the input matrix $\mathbf A$ and privacy of the identity for the desired matrix in the library $\mathcal B$ if $P_C=1$.
The resulting codes are referred to as private and secure GPD (PSGPD) codes. They generalize the approach in \cite{chang2019upload}, enabling a trade-off between (upload) communication load and recovery threshold. 
We finally illustrate the benefits of the proposed codes, which offer a flexible trade-off between communication load and recovery threshold, by analyzing the overall completion time due to both computation and communication.

\subsection{Organization}
The rest of the paper is organized as follows. In Section \ref{secModel}, we present the system models for secure matrix multiplication (Fig.~\ref{FigComputSyst} in Section \ref{subsec1_sec}) and for private and secure matrix multiplication (Fig.~\ref{FigSyst} in Section \ref{Sec_PrivateSecure}), respectively.
In Section \ref{Sec_GPD} we propose an intuitive interpretation of the GPD code introduced in \cite{fahim2017optimal}. Using $z$-transforms, Section \ref{sec_SecPDC} proposes a novel extension of GPD codes by imposing a security constraint on the data matrices and deriving the resulting trade-off between recovery threshold $P_R$ and communication load $C_L$. In this section, we also study overall completion latency encompassing both computation and communication latencies for SGPD codes.
In Section \ref{Sec_sec_Pri}, we address the setup in Fig.~\ref{FigSyst}, again with respect to the trade-off between $P_R$ and $C_L$ and to the overall completion latency. 
The paper is concluded in Section \ref{Sec_Con}.

\section{Problem Statement}\label{secModel}
\subsection{Notation}
Throughout the paper, we denote a matrix with upper boldface letters (e.g., $\mathbf{X}$), and lower boldface letters indicate a vector or a sequence of matrices (e.g., $\mathbf{x}$).
Furthermore, a math calligraphic font refers to a set (e.g., $\mathcal{X})$.
A set $\mathbb{F}$ represents the Galois field with cardinality $|\mathbb{F}|$. We denote by $\mathbb N$ the set of all non-zero positive integers, and for some $a, b \in \mathbb N$, $a\leq b$, $[a , b]\overset{\Delta}{=}\{a, a+ 1, \ldots, b\}$.
For any real number $r$, $\lceil r\rceil$ represents the largest integer nearest to $r$.
he function $H(\cdot)$ represents the entropy of its argument, and $I(X ;Y)$ denotes the mutual information of the random variables $X$ and $Y$.

 \begin{figure*}[t!]
	\begin{center}
	\includegraphics[scale=0.28]{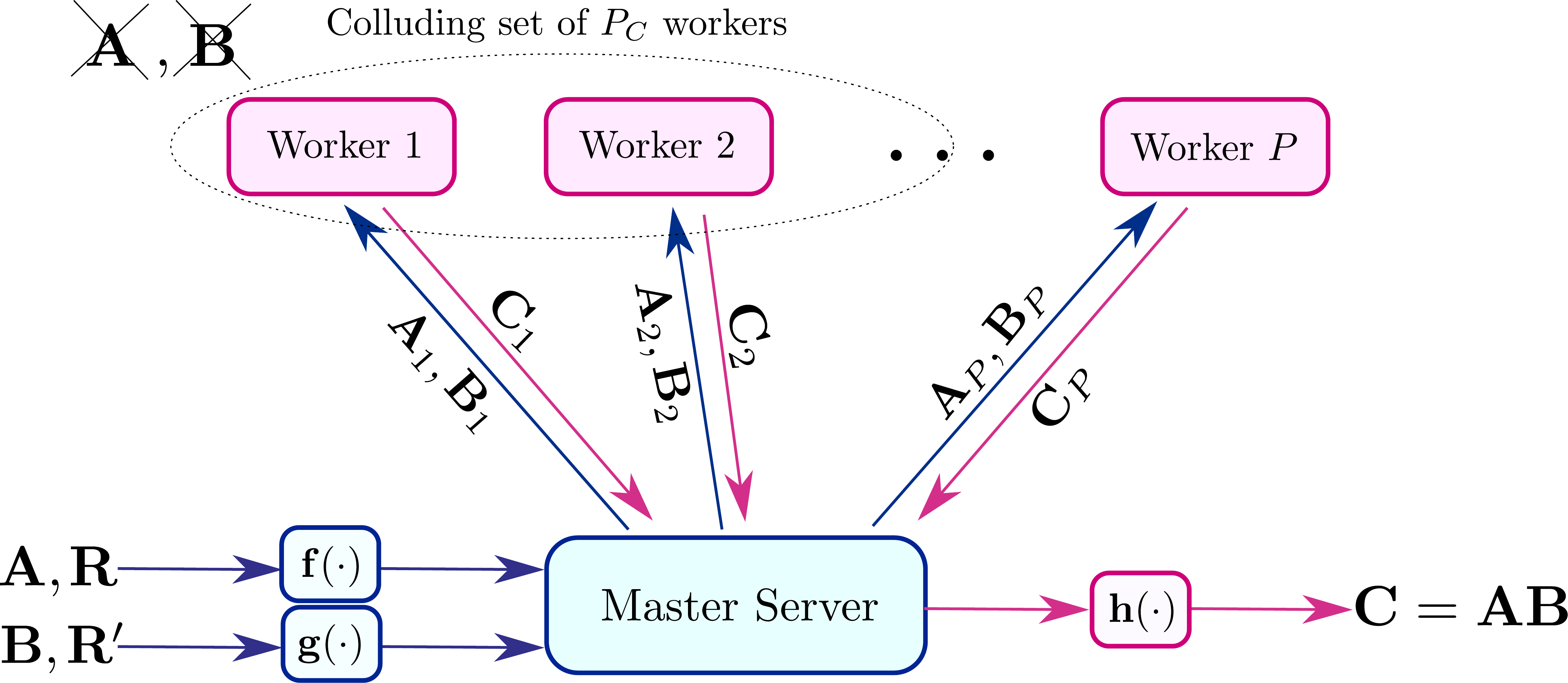}\vspace{-.1cm}~\caption{\footnotesize{ Secure matrix multiplication: the master server encodes both input matrices $\mathbf A$ and $\mathbf B$, to be kept secure from the workers, and both random matrices $\mathbf{R}$ and $\mathbf{R}'$, respectively, to define the computational tasks of the slave servers or workers. The workers may fail or straggle, and they are honest but curious, with colluding subsets of workers of size at most $P_C$. The master server must be able to decode the product $\mathbf C=\mathbf{AB}$ from the output of a subset of $P_R$ servers, which defines the recovery threshold.}}~\label{FigComputSyst}
	\end{center}
	\vspace{-5ex}
\end{figure*}

\subsection{System Model}
As illustrated in Figs.~\ref{FigComputSyst} and \ref{FigSyst}, we consider a distributed computing system with a master server and $P$ slave servers or workers. The master server is interested in computing securely the matrix product $\mathbf C=\mathbf{AB}$ of two data matrices $\mathbf{A}$ and $\mathbf{B}$ with dimensions $T\times S$ and $S\times D$, respectively. The matrices have i.i.d. uniformly distributed  entries from a sufficient large finite field $\mathbb{F}$, with $|\mathbb{F}|>P$.
More precisely, we will consider two scenarios. In the first, both matrices $\mathbf{A}$ and $\mathbf{B}$ are available at the master server and contain confidential data that should be kept secure from the workers (see Fig.~\ref{FigComputSyst}). In the second, only matrix $\mathbf A$ contains confidential information, and there are $L$ public matrices in the set $\mathcal B=\{\mathbf B^{(r)}\}_{r=1}^L$ from which the master node wishes to compute the product $\mathbf C^{(\kappa)}=\mathbf{AB}^{(\kappa)}$ for some $\kappa$th index $\kappa\in[1,L]$. The index must be kept private against the workers (see Fig.~\ref{FigSyst}).
In the following, we first describe the system model for the setup in Fig.~\ref{FigComputSyst}, referred to as \textit{secure matrix multiplication}, followed by the setup for the model in Fig.~\ref{FigSyst}, referred to as \textit{private and secure matrix multiplication}.

\subsection{Secure Matrix Multiplication}\label{subsec1_sec}
For the scenario in Fig.~\ref{FigComputSyst} workers receive information on matrices $\mathbf{A}\in \mathbb F^{T\times S}$ and $\mathbf{B}\in \mathbb F^{S\times D}$ from the master server; they process this information and they respond to the master server, which finally recovers the product $\mathbf C=\mathbf{AB}$ with minimal computational effort. Due to communication and complexity constraints, each worker can receive only $TS/m$ and $SD/n$ symbols, respectively, for some integers $m$ and $n$. 
The workers are honest but curious. Accordingly, we impose the secrecy constraint that, even if up to $P_C < P$ workers collude, the workers cannot obtain any information about both matrices $\mathbf A$ and $\mathbf B$  based on the data received from the master server. 

To keep the data secure and to leverage possible computational redundancy at the workers (namely, if $P/m>1$ and/or $P/n>1$), the master server sends encoded versions of the input matrices to the workers due to the above mentioned communication and complexity constraints.
Specifically, it produces the encoded matrices $ {\mathbf A}_p=\mathbf f_{p}(\mathbf A,\mathbf{R})$, where $\mathbf R$ is a random matrix of dimension $T'\times S'$, for some integers $T'$ and $S'$ to be defined below, via the function 
\begin{equation}\label{Ap}
\mathbf f_{p}:\mathbb{F}^{T\times S}\times \mathbb{F}^{T'\times S'}\rightarrow\mathbb{F}^{T/t\times S/s},
\end{equation}
for some integers $t$ and $s$ such that $m=st$. The resulting $TS/m$ entries in the output of function $\mathbf f_p$ are then sent to worker $p$, with $p\in[1,P]$.
Likewise, the master server computes the encoded matrices $ {\mathbf B}_p=\mathbf g_{p}(\mathbf B,\mathbf{R}')$, where $\mathbf R'$ is a random matrix  of dimension $S'\times D'$, for some integers $S'$ and $D'$ to be defined below, using the function
\begin{equation}\label{Bp}
\mathbf g_{p}:\mathbb{F}^{S\times D}\times \mathbb{F}^{S'\times D'}\rightarrow\mathbb{F}^{S/s\times D/d},
\end{equation} 
for some integers $s$ and $d$ such that $n=sd$. The resulting $SD/n$ entries in $\mathbf B_p$ are then sent to worker $p$. The random matrices $\mathbf{R}$ and $\mathbf{R}'$ consists of i.i.d. uniformly distributed entries from a field $\mathbb{F}$. 
The security constraint imposes the condition
\begin{equation}\label{eq_security_constraint}
I( {\mathbf A}_{\mathcal{P}}, {\mathbf B}_{\mathcal{P}};\mathbf A,\mathbf B)=0, 
\end{equation}
for all subsets of $\mathcal{P}\subset [1,P]$ of $P_C$ workers, where the random matrices $\mathbf{R}$ and $\mathbf{R}'$ serve as random keys in order to meet the security constraint \eqref{eq_security_constraint} \cite{shamir1979share}.

Each worker $p$ computes the product $\mathbf C_p= {\mathbf A}_p {\mathbf B}_p$ of the encoded sub-matrices $ {\mathbf A}_p$ and $ {\mathbf B}_p$. 
The master server collects a subset of $P_R\leq P$ outputs from the workers as defined by the subset $\{\mathbf{C}_p\}_{p \in \mathcal{P}_R}$ with $|\mathcal{P}_R|=P_R$. It then applies a decoding function as
$\mathbf{h}\left(\{\mathbf C_p\}_{p\in \mathcal{P}_R}\right)$,
\begin{equation}\label{eq_encodedC} \mathbf{h}:\underbrace{\mathbb{F}^{T/t\times D/d}\times\cdots \times \mathbb{F}^{T/t\times D/d}}_{P_R \text{~times}}\rightarrow \mathbb{F}^{T\times D}. 
\end{equation}
Note that \textit{correct decoding} translates into the condition 
\begin{equation}\label{eq_decidability_constraint}
H(\mathbf{AB}|\{\mathbf C_p\}_{p\in \mathcal{P}_R})=0. 
\end{equation}
A coding and decoding strategy that satisfies condition \eqref{eq_security_constraint} and \eqref{eq_decidability_constraint} is said to be \textit{feasible}.

For given  parameters $m$ and $n$ the performance of a coding and decoding scheme is measured by the triple $(P_C,P_R,C_L)$,
where $C_L$ is defined as 
\begin{equation}\label{CommLoad}
C_L=\sum_{p\in \mathcal{P}_R}|\mathbf{C}_p|;
\end{equation}
$|\mathbf{C}_p|$ is the dimension of the product matrix $\mathbf{C}_p$ computed by worker $p$.
Note that condition \eqref{eq_decidability_constraint} requires  the inequality 
$\min\{P_R/m,P_R/n\}\geq 1$ or
$P_R\geq P_{R,\min} \overset{\Delta}{=} \max \{m,n\}$, which is hence a lower bound for the minimum recovery threshold. Furthermore, the communication load is lower bounded by $C_L\geq C_{L,\min}\overset{\Delta}{=}TD$, which is the size of the product $\mathbf{C}=\mathbf{AB}$.
 
\begin{figure*}[t!]
	\begin{center}
		\includegraphics[scale=0.268]{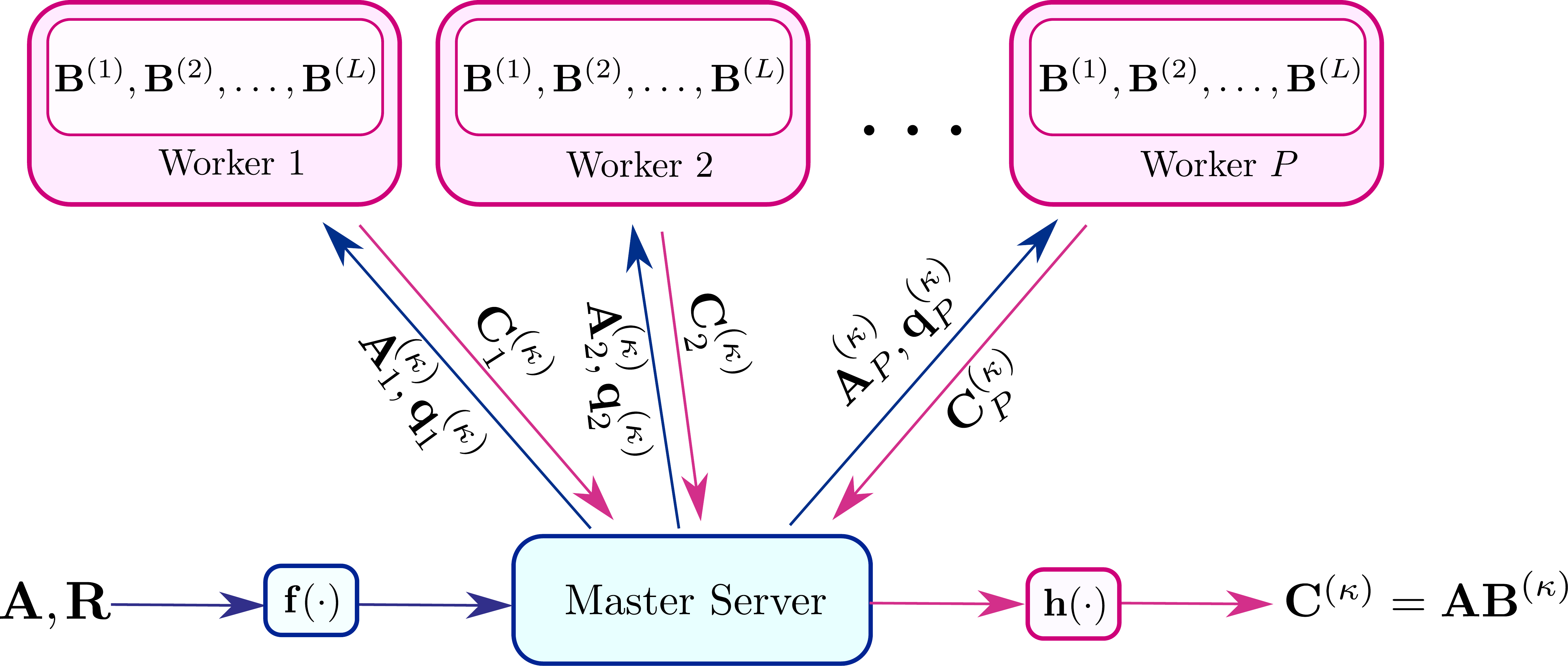}\vspace{-.1cm}~\caption{\footnotesize{Private and secure matrix multiplication: the master server encodes the input matrix $\mathbf A$, to be kept secret from the workers, and generates the encoded matrix $\mathbf A_p^{(\kappa)}$ for each worker $p$. It also sends a query $\mathbf q_p^{(\kappa)}$ as a function of the index $\kappa\in [1,L]$, to be kept private from workers, of the desired product $\mathbf C^{(\kappa)}=\mathbf{AB}^{(\kappa)}$, with matrices $\{\mathbf{B}^{(r)}\}_{r=1}^L$ available at all workers. The non-colluding workers may fail or straggle, and they are honest but curious. The master server must be able to decode the product $\mathbf{C}^{(\kappa)}$ from the output of a subset of $P_R$ servers, which defines the recovery threshold.}}~\label{FigSyst}
	\end{center}
	\vspace{-5ex}
\end{figure*}
 
\subsection{Private and Secure Matrix Multiplication}\label{Sec_PrivateSecure}

In this subsection, we discuss the private and secure matrix multiplication problem illustrated in Fig.~\ref{FigSyst}.  
In this setup, the master server wishes to compute the product $\mathbf C^{(\kappa)}=\mathbf {AB}^{(\kappa)}$ of a confidential input matrix $\mathbf A$ with a matrix $\mathbf B^{(\kappa)}$ from a set of public matrices $\{\mathbf B^{(1)},\ldots,\mathbf B^{(L)}\}$, while keeping the index $\kappa$ of the matrix $\mathbf B^{(\kappa)}$ of interest private from the workers.

Similar to the secure model in Fig.~\ref{FigComputSyst}, we consider a distributed computing system with a master server and $P$ honest but curious workers.
The master server contains a confidential data matrix $\mathbf{A}$ with dimension $T\times S$. Each worker has access to the library $\mathcal{B}$, which consists of $L$ distinct matrices $\{\mathbf B^{(1)},\ldots,\mathbf B^{(L)}\}$, each with dimension $S\times D$. As above, all matrices contain data symbols chosen uniformly i.i.d. from a sufficient large finite field $\mathbb{F}$, with $|\mathbb{F}|>P$.
The master server is interested in computing the matrix product $\mathbf{C}^{(\kappa)}=\mathbf{AB}^{(\kappa)}$ of the data matrix $\mathbf{A}$ and of a matrix $\mathbf{B}^{(\kappa)}$ for some index ${\kappa}\in[1,L]$.
This should be done while keeping the data matrix $\mathbf{A}$ secret against the workers in the same sense as in the scenario of Fig.~\ref{FigComputSyst}, while also ensuring that the index $\kappa$ is kept secret from the workers. 

To do so, as in the PIR problem \cite{sun2017capacity,banawan2018capacity}, the master server generates $P$ query vectors
$\mathbf q_1^{(\kappa)},\ldots,\mathbf q_P^{(\kappa)}\in\mathbb F^L$, for some $L>1$ as a function of the desired index $\kappa$ and sends each worker $p\in[1,P]$, the query vector $\mathbf q_p^{(\kappa)}$.
We assume that the workers do not collude, i.e., we set $P_C=1$. Extensions to any $P_C>1$ are possible and are left for future work. We note that, when the input matrix $\mathbf A$ is an identity matrix, the setup reduces to a PIR problem.

To keep the data matrix $\mathbf{A}$ secure against workers, the master server sends each worker $p\in[1,P]$ an encoded version $\mathbf{A}_p^{(\kappa)}=\mathbf f_p(\kappa,\mathbf A,\mathbf R)\in \mathbb F^{T/t\times S/s}$ which is a function of index $\kappa$, and through it, of the query $\mathbf q_{p}^{(\kappa)}$, of the
data matrix $\mathbf{A}$ and of a random matrix $\mathbf{R}$, 
for some integers $t$ and $s$ such that $m=ts$.

Upon receiving $(\mathbf q_p^{(\kappa)},\mathbf A_p^{(\kappa)})$, each worker $p$ uses the query $\mathbf q_p^{(\kappa)}$ to derive an $S/s\times D/d$ matrix $\mathbf B_p^{(\kappa)}=\mathbf g_{p}(\mathbf q_p^{(\kappa)},\mathcal B)\in\mathbb F ^{S/s\times D/d}$ from the library $\mathcal B$ by using an encoding function 
\begin{equation}\label{Bp2}
\mathbf g_{p}:\mathbb{F}^L\times\underbrace{\mathbb{F}^{S\times D}\times\cdots\times \mathbb{F}^{S\times D}}_{L \text{ times}}\rightarrow\mathbb{F}^{S/s\times D/d},
\end{equation} 
for some integers $s$ and $d$ such that $n=sd$. We emphasize that, unlike the setup considered in Fig.~\ref{FigComputSyst}, the \textit{content} of the desired matrix $\mathbf B^{(\kappa)}$ is not secure against workers, since the library $\mathcal B$ is public.  
Each worker $p$ then computes the product $\mathbf C_p^{(\kappa)}=\mathbf A_p^{(\kappa)}\mathbf B_p^{(\kappa)}$ and sends it to the master server. The master server collects a subset $\{\mathbf{C}_p^{(\kappa)}\}_{p \in \mathcal{P}_R}$ of $P_R\leq P$ outputs from the workers  with $|\mathcal{P}_R|=P_R$. It then applies a decoding function $\mathbf{h}(\{\mathbf C_p^{(\kappa)}\}_{p\in \mathcal{P}_R})$, as in \eqref{eq_encodedC}, in order to retrieve the desired product $\mathbf C^{(\kappa)}=\mathbf {AB}^{(\kappa)}$.

To guarantee the \textit{secrecy of input matrix} $\mathbf A$, in a manner similar to \eqref{eq_security_constraint}, we have the constraint
\begin{equation}\label{eq_security}
I(\mathbf A_{p}^{(\kappa)},\mathbf B_{p}^{(\kappa)},\mathbf q_{p}^{(\kappa)},\mathcal{B};\mathbf A)=0,
\end{equation}
for all $p\in [1,P]$. Following the PIR formulation on \cite{kim2019private}, in order to ensure the \textit{privacy of index} $\kappa$,  for some value of $\kappa$ the information available at each worker should be statistically indistinguishable from that available for any other value  $\kappa'\neq \kappa$. Mathematically, for all $\kappa,\kappa'\in [1,L]$ with $\kappa'\neq \kappa$ and for all workers $p\in[1,P]$, we have the condition
\begin{equation}\label{eq_privacy}
(\mathbf q_p^{(\kappa)},\mathbf A_p^{(\kappa)},\mathbf C_p^{(\kappa)}, \mathcal B )\sim (\mathbf q_p^{(\kappa')},\mathbf A_p^{(\kappa')},\mathbf C_p^{(\kappa')}, \mathcal B),
\end{equation}
that is, the joint distribution of variables $(\mathbf q_p^{(\kappa')},\mathbf A_p^{(\kappa')},\mathbf C_p^{(\kappa')}, \mathcal B)$ should be the same  for any pair of index values $\kappa'\neq \kappa$.
Finally, the \textit{correct decoding} requirement is defined as in \eqref{eq_decidability_constraint}, that is
\begin{equation}\label{eq_decidability_constraintPri}
H(\mathbf{AB}^{(\kappa)}|\{\mathbf C_p^{(\kappa)}\}_{p\in \mathcal{P}_R})=0. 
\end{equation}
A coding and decoding strategy that satisfies conditions \eqref{eq_security}, \eqref{eq_privacy}, and \eqref{eq_decidability_constraintPri} is said to be \textit{feasible}. 
For given parameters $m$ and $n$ the performance is measured by the pair $(P_R,C_L)$, with $P_C=1$, where $C_L$ is the communication load defined in \eqref{CommLoad}.
 
\section{Background: Generalized PolyDot Code  without Security Constraint}\label{Sec_GPD}
 
In this section, we consider the system model shown in Fig.~\ref{FigComputSyst} and review the GPD construction first proposed in
\cite{fahim2017optimal} and later improved in \cite{yu2018straggler,dutta2018unified} for the special case of no secrecy constrains, i.e., $P_C=0$. In the process, we propose a novel intuitive interpretation of GPD encoding and decoding based on the distributed computation of samples from convolutions via $z$-transforms.
   
We start by recalling that the GPD coding scheme  achieves the best currently known trade-off between recovery threshold $P_R$ and communication load $C_L$ for $P_C=0$, i.e., under no security constraint. The entangled polynomial codes of \cite{yu2018straggler} have the same properties in terms of $(P_R,P_C)$.
 The GPD codes for $P_C=0$ also achieve the optimal recovery threshold among all linear coding strategies in the cases of $t = 1$ or $d = 1$, also they minimize the recovery threshold for the minimum communication load $C_{L,\min}$ \cite{yu2017polynomial,yu2018straggler}. 
 \begin{figure*}[t!]
 	\begin{center}
 		\includegraphics[width=10.9cm]{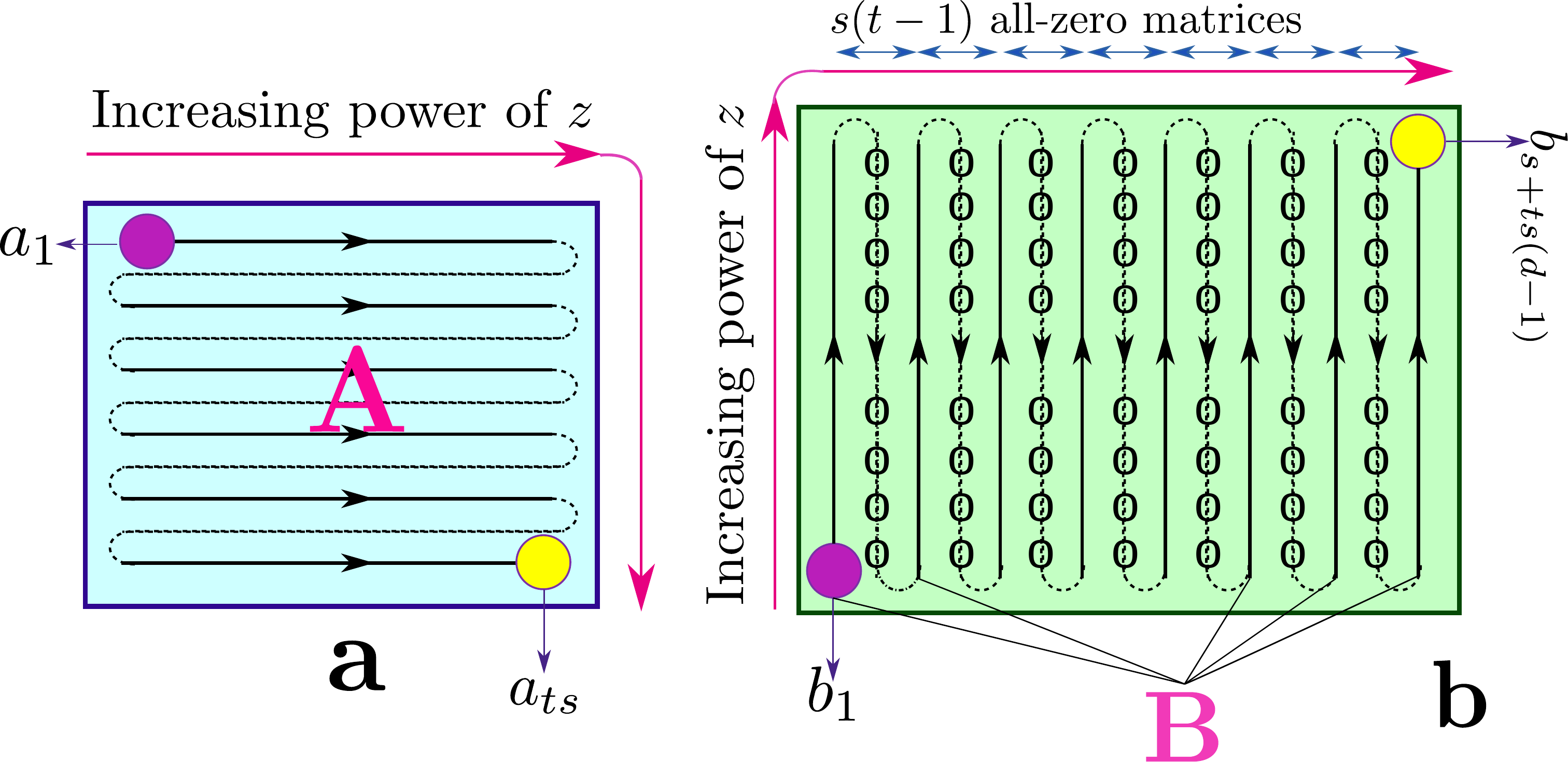}~\caption{\footnotesize 
 			Construction of the time sequences $\mathbf{a}$ and $\mathbf{b}$ used to define the generalized PolyDot (GPD) code. The zero dashed lines in  $\mathbf{b}$ indicates all-zero block sequences. Each solid arrows in $\mathbf{a}$ and $\mathbf{b}$ shows a distinct row of $\mathbf{A}$ and a column of $\mathbf{B}$, respectively.  }~\label{figGPD}
 	\end{center}
 	
 \end{figure*}
 
The GPD code splits the data matrices $\mathbf{A}$ and $\mathbf{B}$ both horizontally and vertically as
\begin{equation}\label{Polydot}
{ 
	\mathbf{A}=
	\left[ {\begin{array}{cccc}
		\mathbf{A}_{1,1}&\ldots&\mathbf{A}_{1,s}\\
		\vdots&\ddots&\vdots\\
		\mathbf{A}_{t,1}&\ldots&\mathbf{A}_{t,s}
		\end{array} } \right],}
\quad
\mathbf{B}=
\left[ {\begin{array}{cccc}
	\mathbf{B}_{1,1}&\ldots&\mathbf{B}_{1,d}\\
	\vdots&\ddots&\vdots\\
	\mathbf{B}_{s,1}&\ldots&\mathbf{B}_{s,d}
	\end{array} } \right].
\end{equation}
The parameters $s,t$, and $d$ can be set arbitrarily under the constraints $m=ts$ and $n=sd$. Note that polynomial codes set $s=1$, while MatDot codes have $t=d=1$ \cite{dutta2018optimal}.
All  sub-matrices $\mathbf{A}_{i,j}$ and $\mathbf{B}_{k,l}$ have dimensions $T/t\times S/s$ and $S/s\times D/d$, respectively.
The GPD code computes each block $(i,j)$ of the product $\mathbf{C}=\mathbf{AB}$, namely $\mathbf{C}_{i,j}=\sum_{k=1}^s\mathbf{A}_{i,k}\mathbf{B}_{k,j}$, for $i\in[1,t]$ and $j\in[1,d]$, in a distributed fashion. 
This is done by means of polynomial encoding and polynomial interpolation. As we review next, the computation of block $\mathbf{C}_{i,j}$ can be interpreted as the evaluation of the middle sample of the convolution $\mathbf{c}_{i,j}=\mathbf{a}_i*\mathbf{b}_j$ between the block sequences $\mathbf{a}_i=[\mathbf{A}_{i,1},\ldots,\mathbf{A}_{i,s}]$ and $\mathbf{b}_j=[\mathbf{B}_{s,j},\ldots,\mathbf{B}_{1,j}]$.
In fact, the $s$th sample of the block sequence $\mathbf{c}_{i,j} $ equals $\mathbf{C}_{i,j}$, i.e., $[\mathbf{c}_{i,j}]_s=\mathbf{C}_{i,j}$.
The computation is carried out distributively in the frequency domain by using $z$-transforms with different workers being assigned distinct samples in the frequency domain.

To elaborate, define the block sequence $\mathbf{a}$ obtained by concatenating the block sequences $\mathbf{a}_i$ as $\mathbf{a}=\{\mathbf{a}_1,\mathbf{a}_2,\ldots,\mathbf{a}_t\}$. 
Pictorially, a sequence $\mathbf{a}$ is obtained from the matrix $\mathbf{A}$ by reading the blocks in the left-to-right top-to-bottom order, as seen in Fig.~\ref{figGPD}.
We also introduce the longer time block sequence $\mathbf{b}$ as 
\begin{equation}\label{eqTimeSeqB}
\mathbf{b}=\{\mathbf{b}_1,\mathbf{0},\mathbf{b}_2,\mathbf{0},\ldots,\mathbf{b}_d\},
\end{equation}
with ${\mathbf{0}}$ being a block sequence of $s(t^*-1)$ all-zero block matrices with dimensions $S/s\times D/d$. The sequence $\mathbf{b}$ can be obtained from the matrix $\mathbf{B}$ by following the bottom-to-top left-to-right order shown in Fig.~\ref{figGPD} and by adding the all-zero block sequences between any two columns of the matrix  $\mathbf{B}$.  

In the frequency domain, the $z$-transforms of sequences $\mathbf{a}$ and $\mathbf{b}$ are obtained as
\begin{align}
\mathbf{F}_{\mathbf{a}}(z)=&\sum_{r=0}^{ts-1} [\mathbf{a}]_{r+1}z^{r}=\sum_{i=1}^t\sum_{j=1}^s\mathbf{A}_{i,j}z^{s(i-1)+j-1},\label{eqDuttapAz}\\
\mathbf{F}_{\mathbf{b}}(z)=&\sum_{r=0}^{s-1+ts(d-1)}[\mathbf{b}]_{r+1}z^{r}=\sum_{k=1}^s\sum_{l=1}^d\mathbf{B}_{k,l}z^{s-k+ts(l-1)},\label{eqDuttapBz}
\end{align}
respectively.
The master server evaluates the polynomials $\mathbf{F}_{\mathbf{a}}(z)$ and $\mathbf{F}_{\mathbf{b}}(z)$ in $P$ non-zero distinct points $z_1,\ldots, z_P\in \mathbb{F}$ and sends the corresponding linearly encoded matrices $\mathbf{A}_p=\mathbf{F}_{\mathbf{a}}(z_p)$ and $\mathbf{B}_p=\mathbf{F}_{\mathbf{b}}(z_p)$ to server $p$. The encoding functions are hence given by the polynomial evaluations \eqref{eqDuttapAz} and \eqref{eqDuttapBz}, for $z_1,\ldots,z_p$. Server $p$ computes the multiplication $\mathbf{F}_{\mathbf{a}}(z_p)\mathbf{F}_{\mathbf{b}}(z_p)$ and sends it to the master server.
The master server computes the inverse $z$-transform for the received products $\{\mathbf{A}_p\mathbf{B}_p\}_{p\in \mathcal{P}_R} =\{\mathbf{F}_{\mathbf{a}}(z_p)\mathbf{F}_{\mathbf{b}}(z_p)\}_{p\in \mathcal{P}_R}$, obtaining the convolution $\mathbf{a}*\mathbf{b}$.

From the convolution $\mathbf{a}*\mathbf{b}$ we can see that the master server is able to compute all the desired blocks $\mathbf{C}_{i,j}$ by reading the middle samples of the convolutions $\mathbf{c}_{i,j}=\mathbf{a}_i*\mathbf{b}_j$ from samples of the sequence $\mathbf{c}=\mathbf{a}*\mathbf{b}$ in the order $[\mathbf{c}]_{s-1}=\mathbf{C}_{1,1},[\mathbf{c}]_{2s-1}=\mathbf{C}_{2,1},\ldots,[\mathbf{c}]_{ts-1}=\mathbf{C}_{t,1},[\mathbf{c}]_{s-1+t^*s}=\mathbf{C}_{1,2},\ldots,[\mathbf{c}]_{ts-1+t^*s}=\mathbf{C}_{t,2},\ldots$. Note that, in particular, the zero block subsequences added to sequence $\mathbf{b}$ ensure that no interference from the other convolutions, $\mathbf{c}_{i',j'}$ affects the middle ($s$th) sample of a convolution $\mathbf{c}_{i,j}$ with $i'\neq i$ and $j'\neq j$.

To carry out the inverse transform, the master server needs to collect as many values $\mathbf{F}_{\mathbf{a}}(z_p)\mathbf{F}_{\mathbf{b}}(z_p)$ as there are samples of the sequence $\mathbf{a}*\mathbf{b}$, yielding the recovery threshold
\begin{equation}\label{eq_RT}
P_R=tsd+s-1.
\end{equation}
Equivalently, in terms of the underlying polynomial interpretation, the master server needs to collect a number of evaluations of the polynomial $\mathbf{F}_{\mathbf{a}}(z)\mathbf{F}_{\mathbf{b}}(z)$ equal to the degree of $\mathbf{F}_a(z)\mathbf{F}_b(z)$ plus one.
This computation is of complexity order $\mathcal{O}(TDP_R(\log(P_R))^2)$  \cite{dutta2018optimal}.
 Furthermore, the communication load is given as
\begin{equation}\label{eq_CL}
C_L=P_R\frac{TD}{td},
\end{equation}
where $TD/(td)$ is the size of each matrix $\mathbf{F}_{\mathbf{a}}(z)\mathbf{F}_{\mathbf{b}}(z)$.

\section{Secure PolyDot Code}\label{sec_SecPDC}
In this section, we propose a novel extension of the GPD code that is able to ensure the secrecy constraint for any $P_C<P$. We also derive the corresponding achievable set of triples $(P_C,P_R,C_L)$. As we will discuss, the projection of this set onto the plane defined by the condition $P_C=0$ includes the set of pairs $(P_R,C_L)$ in \eqref{eq_RT} and \eqref{eq_CL} obtained by the GPD code \cite{dutta2018unified}.
The proposed secure GPD (SGPD) code augments matrices $\mathbf{A}$ and $\mathbf{B}$ by adding $P_C$ random block matrices to the input matrices $\mathbf{A}$ and $\mathbf{B}$, in a manner similar to prior works  \cite{nodehi2018limited,yu2018lagrange,chang2018capacity,kakar2019capacity,rafael2018codes}, 
yielding augmented matrices $\mathbf{A}^*$ and $\mathbf{B}^*$. As we will see, a direct application of the GPD codes to these matrices is suboptimal. 

   \begin{figure*}[t!]
	\begin{center}
		\includegraphics[width=11.6cm]{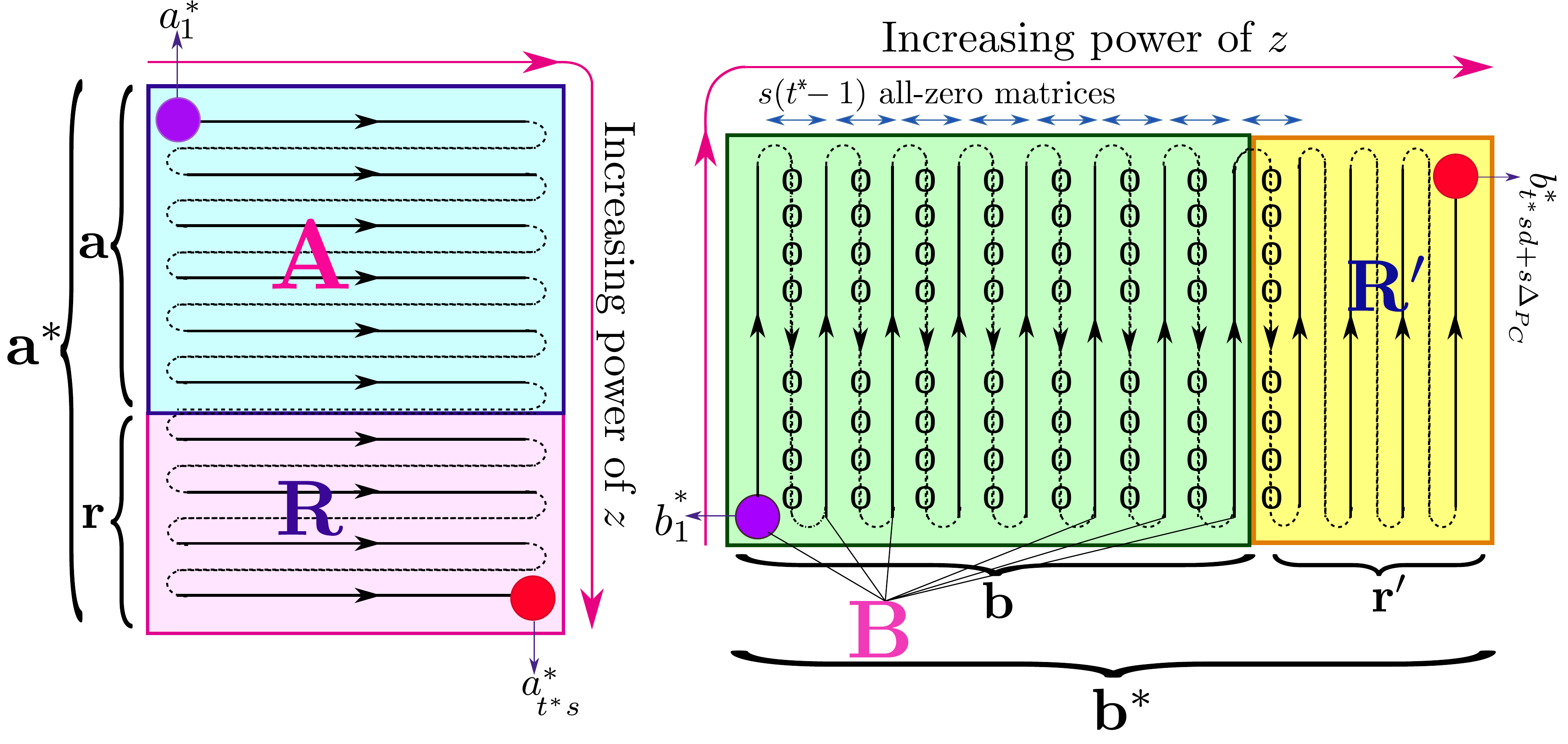}~\caption{\footnotesize
			Construction of the time block sequences
			$\mathbf{a}^*=[\mathbf{a},\mathbf{r}]$ and
			$\mathbf{b}^*=[\mathbf{b},\mathbf{r}']$ in
			\eqref{eqTimeSeqAR} and \eqref{eqTimeSeqBR} used to define
			the SGPD code for  the case $s< t$. The zero dashed lines in $\mathbf{b}$  and $\mathbf{r}'$ indicate all-zero block sequences.  }~\label{SecGPD} 
	\end{center}
	\vspace{-5ex}
\end{figure*}
In contrast, we propose a novel way to construct sequences $\mathbf{a}^*$ and $\mathbf{b}^*$ from matrices $\mathbf{A}^*$ and $\mathbf{B}^*$ that enables the definition of a more efficient code by means of the $z$-transform approach discussed in the previous section. To this end, we follow the design criterion of decreasing the recovery threshold $P_R$ for a given communication load $C_L$. Based on the discussion in the previous section, this goal can be realized by decreasing the length of the sequence $\mathbf{c}^*=\mathbf{a}^**\mathbf{b}^*$, which can in turn be ensured by reducing the length of the sequence $\mathbf{b}^*$ for a given length of the sequence $\mathbf{a}^*$. We accomplish this objective by {$\emph (i)$} adaptively appending rows \textit{or} columns with random elements to matrix $\mathbf{A}$, and, correspondingly columns \textit{or} rows to $\mathbf{B}$, which can reduce the recovery threshold; and {$\emph (ii)$} modifying the zero padding procedure (see Fig.~\ref{figGPD}) for the construction of sequence $\mathbf{b}^*$. In order to account for point {\emph{(i)}}, we consider separately the two cases $s< t$ and $s\geq t$.

\subsection{Secure Generalized PolyDot Code: The $s< t$ Case}\label{subsec1}
As illustrated in Fig.~\ref{SecGPD}, when $s< t$, we augment the input matrices $\mathbf{A}$ and $\mathbf{B}$ by adding 
\begin{equation}\label{eq_delta1}
\Delta_{P_C}\overset{\Delta}{=} \left\lceil 
\frac{P_C}{s} \right\rceil,
\end{equation}
random row and column blocks to matrices $\mathbf{A}$ and $\mathbf{B}$, respectively. Accordingly, the $t^*\times s$ augmented block matrix $\mathbf{A}^*$ with $t^*=t+\Delta_{P_C}$ is obtained as
  \begin{equation}\label{eq_matrixA}
   { 
   	\mathbf{A}^*=\left[ {\begin{array}{c}
   	   		\mathbf{A}\\\mathbf{R} 
   	   		\end{array} } \right]=
   	\left[ {\begin{array}{cccc}
   		\mathbf{A}_{1,1}&\ldots&\mathbf{A}_{1,s}\\
   		\vdots&\ddots&\vdots\\
   		\mathbf{A}_{t,1}&\ldots&\mathbf{A}_{t,s}\\
   		\mathbf{R}_{1,1}&\ldots&\mathbf{R}_{1,s}\\
   		\vdots&\ddots&\vdots\\
   		\mathbf{R}_{\Delta_{{P_C},1}}&\ldots&\mathbf{R}_{\Delta_{{P_C},s}}
   		\end{array} } \right],}
  \end{equation}
  while the $s\times d^*$ augmented matrix $\mathbf{B}^*=[\mathbf{B}~\mathbf{R}']$ with $d^*=d+\Delta_{P_C}$ is obtained as   \begin{equation}\label{eq_matrixB}
  \mathbf{B}^*=
  \left[ {\begin{array}{cccccc}
  	\mathbf{B}_{1,1}&\ldots&\mathbf{B}_{1,d}&\mathbf{R}'_{s,1}&\ldots&\mathbf{R}'_{s,\Delta_{P_C}}\\
  	\vdots&\ddots&\vdots&\vdots&\ddots&\vdots\\
  	\mathbf{B}_{s,1}&\ldots&\mathbf{B}_{s,d}&\mathbf{R}'_{1,1}&\ldots&\mathbf{R}'_{1,\Delta_{P_C}}
  	\end{array} } \right].
  \end{equation} 
  In \eqref{eq_matrixA} and \eqref{eq_matrixB}, if $s$ divides $P_C$, all block matrices $\mathbf{R}_{i,j}\in \mathbb{F}^{\frac{T}{t}\times\frac{S}{s}}$ and $\mathbf{R}'_{i,j}\in \mathbb{F}^{\frac{S}{s}\times\frac{D}{d}}$ are generated with i.i.d. uniform random elements in $\mathbb{F}$. Otherwise, if $\Delta_{P_C}-P_C/s >0$, the last $s\Delta_{P_C}-P_C$ matrices  in \eqref{eq_matrixA}, with right-to-left ordering in the last row of $\mathbf{R}_{i,j}$, and in \eqref{eq_matrixB} with top-to-bottom ordering in the last column of  $\mathbf{R}'_{i,j}$, are all-zero block matrices.

As illustrated in Fig.~\ref{SecGPD}, in the SGPD scheme, the block sequence $\mathbf{a}^*$ is defined in the same way as in the conventional GPD, yielding  
  \begin{equation}\label{eqTimeSeqAR}
  \mathbf{a}^*=\{\mathbf{a}_1,\ldots,\mathbf{a}_t,\mathbf{r}_1,\ldots,\mathbf{r}_{\Delta_{P_C}}\},
  \end{equation}
where $\mathbf{r}_i$ is the $i$th row of the block matrix $\mathbf{R}$, $i\in[1,\Delta_{P_C}]$. We also define the time block sequence $\mathbf{b}^*=\{\mathbf{b},\mathbf{r}'\}$ as
 \begin{equation}\label{eqTimeSeqBR}
  \mathbf{b}^*=\{\mathbf{b}_1,\mathbf{0},\mathbf{b}_2,\mathbf{0},\ldots,\mathbf{b}_d,\mathbf{0},\mathbf{r}'_1,\mathbf{r}'_2,\ldots,\mathbf{r}'_{\Delta_{P_C}}\},
  \end{equation}
   where $\mathbf{0}$ is block sequences of $s(t^*-1)$ all-zero block matrices, respectively, with dimensions $S/s\times D/d$, while $\mathbf{r}'_j$ is the $j$th column of the random matrix $\mathbf{R}'$. The key novel idea of this construction is that no zero matrices are introduced between the columns of matrix $\mathbf{R}'$.
   As shown in Theorem \ref{SecurThm} below, this construction allows the master server to recover all the desired submatrices $\mathbf{C}_{i,j}$ for $i\in[1,t]$ and $j\in[1,d]$ from the middle samples of the convolutions $\mathbf{c}_{i,j}=\mathbf{a}_i*\mathbf{b}_j$ (see Fig.~\ref{figconv} for an illustration).
\begin{Theorem}\label{SecurThm}
For a given security level $P_C<P$, the proposed SGPD code achieves the recovery threshold $P_R$
\begin{equation}\label{RT_1}
{\small
\begin{cases}
tsd+s-1,&\text{ if } P_C=0,\\
t^*s(d+1)+s\Delta_{P_C}-1,& \text{ if } P_C\geq 1 \text{ and }\Delta_{P_C}=\frac{P_C}{s},\\
t^*s(d+1)-s\Delta_{P_C}+2P_C-1,&  \text{ if } P_C\geq 1 \text{ and } \Delta_{P_C}>\frac{P_C}{s},
\end{cases}
}
\end{equation}
and the communication load \eqref{eq_CL},
where  $t^*=t+\Delta_{P_C}$ and $d^*=d+\Delta_{P_C}$ for any integer values $t,s$, and $d$ such that $s< t$, $m=ts$, and $n=sd$.
\end{Theorem}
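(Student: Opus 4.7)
The plan is to reduce the SGPD construction to the convolutional / $z$-transform viewpoint developed in Section~\ref{Sec_GPD}, but applied to the augmented matrices $\mathbf{A}^*$ and $\mathbf{B}^*$ of \eqref{eq_matrixA}--\eqref{eq_matrixB}. I would first define $\mathbf{F}_{\mathbf{a}^*}(z)$ and $\mathbf{F}_{\mathbf{b}^*}(z)$ as the $z$-transforms of the sequences $\mathbf{a}^*$ and $\mathbf{b}^*$ in \eqref{eqTimeSeqAR}--\eqref{eqTimeSeqBR}, generalizing \eqref{eqDuttapAz}--\eqref{eqDuttapBz} to the larger matrices; the random submatrices of $\mathbf{A}^*$ occupy the $s\Delta_{P_C}$ highest-degree monomials of $\mathbf{F}_{\mathbf{a}^*}(z)$, while those of $\mathbf{B}^*$ are placed contiguously at the top of $\mathbf{F}_{\mathbf{b}^*}(z)$ without intervening zero padding. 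The master server would then encode $\mathbf{A}_p=\mathbf{F}_{\mathbf{a}^*}(z_p)$ and $\mathbf{B}_p=\mathbf{F}_{\mathbf{b}^*}(z_p)$ at distinct points $z_p\in\mathbb F$ and interpolate the product polynomial from the returned values $\mathbf{A}_p\mathbf{B}_p$.

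For the recovery-threshold formula \eqref{RT_1}, a direct enumeration gives $|\mathbf{a}^*|=t^*s$ and $|\mathbf{b}^*|=dst^*+s\Delta_{P_C}$, so that $\mathbf{F}_{\mathbf{a}^*}(z)\mathbf{F}_{\mathbf{b}^*}(z)$ has degree $t^*s(d+1)+s\Delta_{P_C}-2$, which, after adding one, proves the case~2 formula. In case~3 the last $s\Delta_{P_C}-P_C$ entries of both $\mathbf{a}^*$ and $\mathbf{b}^*$ are all-zero by construction, so the effective top degree of each $z$-transform drops by $s\Delta_{P_C}-P_C$, yielding $t^*s(d+1)-s\Delta_{P_C}+2P_C-1$. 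Case~1 is the unsecured GPD and inherits \eqref{eq_RT} directly. I would then verify the decodability condition \eqref{eq_decidability_constraint} by checking, as in Section~\ref{Sec_GPD}, that each desired block $\mathbf{C}_{i,j}$ for $i\in[1,t]$, $j\in[1,d]$ appears as the coefficient of $z^{si-1+t^*s(j-1)}$ and that no cross-product involving a random block aliases into a data exponent. Algebraically, a random-data collision would require $s(i'-i_0)+(k-k')+t^*s(l-l_0)=0$ with $i'\in[t+1,t^*]$, $i_0\in[1,t]$; the bounds $i'-i_0\geq 1$ and $|k-k'|\leq s-1$ preclude $l=l_0$, while the gap $t^*s$ between consecutive $l$-blocks rules out $l\neq l_0$. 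This non-aliasing bookkeeping, together with its analogue for data-random and random-random cross-products, is the main obstacle, and I expect case~3---where zero and random blocks interleave in both sequences---to require the most care.

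For the security constraint \eqref{eq_security_constraint} I would apply a Vandermonde argument. In every case, $\mathbf{A}^*$ contains exactly $P_C$ i.i.d.\ uniform random blocks---equal to $s\Delta_{P_C}$ when $s\mid P_C$ and to $s\Delta_{P_C}-(s\Delta_{P_C}-P_C)=P_C$ otherwise---occupying $P_C$ distinct monomials of $\mathbf{F}_{\mathbf{a}^*}(z)$. After subtracting the deterministic $\mathbf A$-dependent contribution, the $P_C$ evaluations $\{\mathbf{F}_{\mathbf{a}^*}(z_p)\}_{p\in\mathcal P}$ at any colluding subset $\mathcal P$ are the image of these random blocks under a $P_C\times P_C$ generalized Vandermonde matrix, which is invertible since the $z_p$ are distinct and $|\mathbb F|>P$. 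Hence $\mathbf{A}_{\mathcal P}$ is uniformly distributed and independent of $\mathbf A$; symmetrically $\mathbf{B}_{\mathcal P}$ is uniform and independent of $\mathbf B$. The independence of $\mathbf R$ and $\mathbf R'$ then lifts these marginal statements to the joint mutual-information condition \eqref{eq_security_constraint}. Finally, the communication load claim follows directly from \eqref{CommLoad}, since each worker returns a single matrix of size $(T/t)\times(D/d)$, giving $C_L=P_R\cdot TD/(td)$.
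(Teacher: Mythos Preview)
Your proposal is correct and follows essentially the same route as the paper: define the $z$-transforms of $\mathbf{a}^*$ and $\mathbf{b}^*$, read off $P_R$ as one plus the degree of the product polynomial, verify non-aliasing of the desired coefficients against the cross-terms, and establish secrecy via the invertibility of the $P_C\times P_C$ linear map from random blocks to shares. The paper organizes the non-aliasing check by splitting $\mathbf F_{\mathbf a^*}=\mathbf F_1+\mathbf F_2$ and $\mathbf F_{\mathbf b^*}=\mathbf F_3+\mathbf F_4$ and listing the exponent ranges of $\mathbf F_1\mathbf F_4$, $\mathbf F_2\mathbf F_3$, $\mathbf F_2\mathbf F_4$ separately, whereas you phrase it as a single Diophantine collision equation; these are the same argument in different clothing.

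The one place you differ noticeably is the security proof: the paper runs the standard entropy chain \`a la secret sharing (culminating in $H(\mathbf A_{\mathcal P})+H(\mathbf B_{\mathcal P})\le H(\mathbf R_{\mathcal P})+H(\mathbf R'_{\mathcal P})$), whereas you argue directly that the conditional map $(\mathbf R,\mathbf R')\mapsto(\mathbf A_{\mathcal P},\mathbf B_{\mathcal P})$ is a bijection, hence the shares are uniform and independent of $(\mathbf A,\mathbf B)$. Both arguments hinge on the same fact---that the $P_C$ random blocks sit at \emph{consecutive} exponents, so the relevant $P_C\times P_C$ matrix factors as $\mathrm{diag}(z_p^{e_0})$ times a standard Vandermonde and is therefore invertible---and the paper's step~(b) invokes exactly this. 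Your formulation is slightly more direct; just be sure, when you write it up, to justify invertibility explicitly (a generic ``generalized Vandermonde'' need not be nonsingular, but here the exponents are consecutive).
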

 \begin{proof}
The $z$-transform of sequences $\mathbf{a}^*$ and $\mathbf{b}^*$ are given respectively as
\begin{align}
\mathbf{F}_{\mathbf{a}^*}(z)&
= \underbrace{\sum_{i=1}^{t}\sum_{j=1}^{s} \mathbf{A}^*_{i,j}z^{s(i-1)+(j-1)}}_{\overset{\Delta}{=}~\mathbf F_1(z)}\nonumber\\
&+\underbrace{\sum_{i=t+1}^{t^*}\sum_{j=1}^{s} \mathbf{A}^*_{i,j} z^{s(i-1)+j-1}}_{\overset{\Delta}{=}~\mathbf F_2(z)},\label{eqsec1_pAM1li}\\
\mathbf{F}_{\mathbf{b}^*}(z)&= 	 	 \underbrace{\sum_{k=1}^{s}\sum_{l=1}^{d}\mathbf{B}^*_{k,l}z^{s-k+t^*s(l-1)}}_{\overset{\Delta}{=} ~\mathbf F_3(z)}\nonumber\\
&+\underbrace{\sum_{k=1}^{s}\sum_{l=d+1}^{d^*}\mathbf{B}^*_{k,l} z^{t^*sd+s(l-d)-k }}_{\overset{\Delta}{=}~\mathbf F_4(z)}.\label{eqsec1_pBM1li}
\end{align}
\begin{figure}[t!]
	\begin{center}
		\includegraphics[scale=0.77]{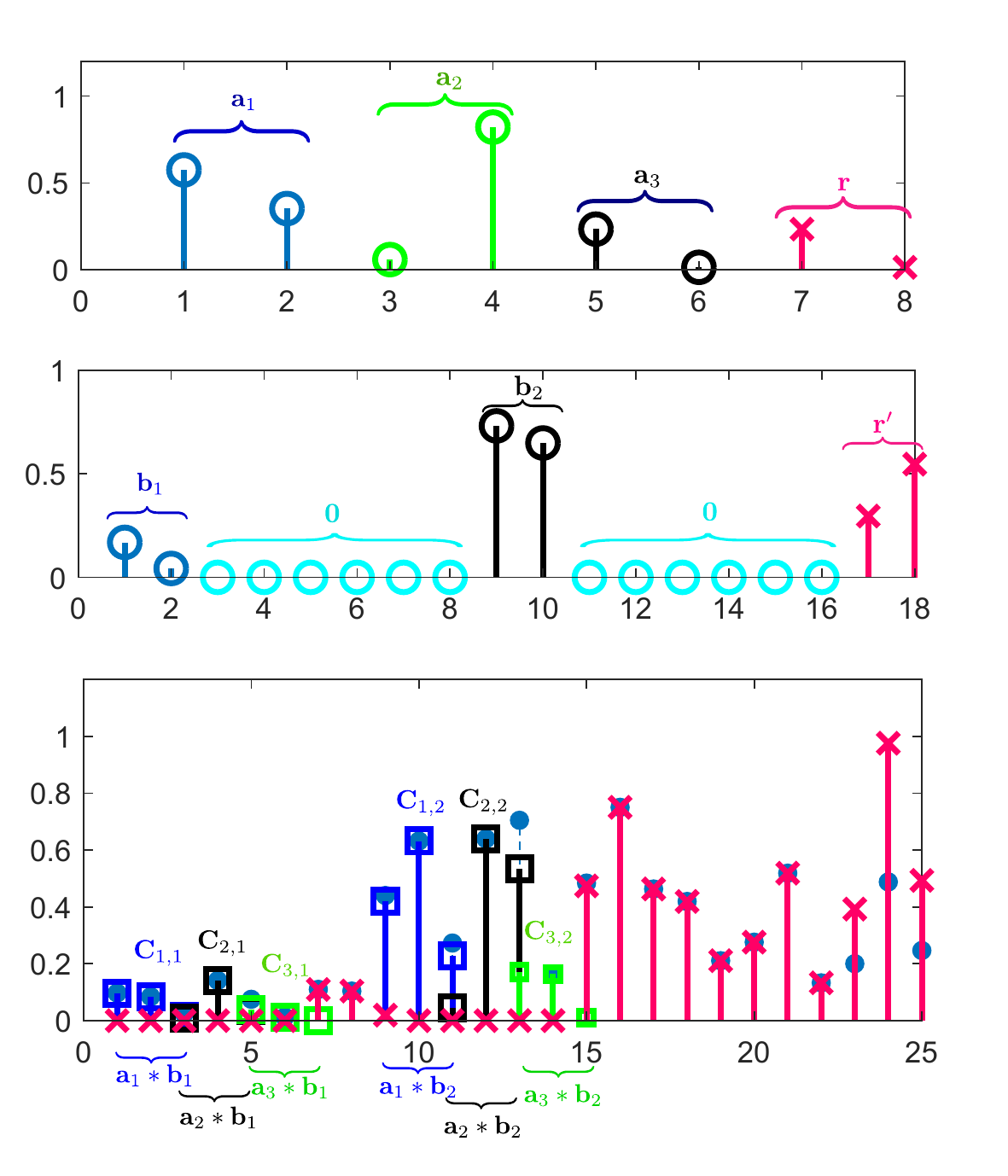}\vspace{-.1cm}~\caption{\footnotesize{Outcome of the communication $\mathbf C_{i,j}=\mathbf a_i*\mathbf b_j$ for $t=3,s=2,d=2$, and 		$P_C=2$. Dashed blue stems with filled markers represent 			the convolution $\mathbf{c}^*$. Individual convolutions $\mathbf{c}_{i,j}$ are shown in different colors with square markers. Contributions from one or both random matrices are shown as red crosses. The desired submatrices $\mathbf{C}_{i,j}$ are seen to equal the	corresponding samples from the sequence $\mathbf{c}^*$, associated with the center points of the individual  convolutions. }}~\label{figconv}
	\end{center}
	\vspace{-5ex}
\end{figure}
The master server evaluates $\mathbf{F}_{\mathbf{a}^*}(z)$ and $\mathbf{F}_{\mathbf{b}^*}(z)$ at $P$ non-zero distinct points $z_1,\ldots,z_P\in \mathbb{F}$, which define the encoding functions, and sends both matrices $\mathbf{A}_p=\mathbf{F}_{\mathbf{a}^*}(z_p)$ and $\mathbf{B}_p=\mathbf{F}_{\mathbf{b}^*}(z_p)$ to worker $p$. Worker $p$ performs the multiplication $\mathbf{F}_{\mathbf{a}^*}(z_p)\mathbf{F}_{\mathbf{b}^*}(z_p)$, and sends the results back to the master server. 
To reconstruct all blocks $\mathbf{C}_{i,j}$ of matrix $\mathbf{C}=\mathbf{AB}$, the master server carries out a polynomial interpolation, or equivalently, it computes the inverse $z$-transform, upon  receiving a number of multiplication results equal to at least the length of the sequence $\mathbf{c}^*=\mathbf{a}^**\mathbf{b}^*$. 
As we detail next, the $(i,l)$ block $\mathbf{C}_{i,l}=\sum_{r=1}^{s}\mathbf{A}_{i,r}\mathbf{B}_{r,l}$, for all $i\in[1,t]$ and $l\in[1,d]$, of matrix $\mathbf{C}=\mathbf{AB}$ can be seen equal to the $(si-1+(l-1)t^*s)$th sample of the convolution $\mathbf{c}^*=\mathbf{a}^**\mathbf{b}^*$. An illustration can be found in Fig.~\ref{figconv}.

To see this, we first note that, by the properties of GPD codes, matrix $\mathbf{C}_{i,l}$ is the coefficient of the monomial $z^{si-1+(l-1)t^*s}$ in $\mathbf{F}_{1}(z)\mathbf{F}_{3}(z)$. Note that this holds since the polynomial $\mathbf{F}_1(z)$ and $\mathbf{F}_3(z)$ are defined as GPD codes. We now need to show that no other contribution to this term arises from the products $\mathbf{F}_1(z)\mathbf{F}_4(z)$, $\mathbf{F}_2(z)\mathbf{F}_3(z)$, and $\mathbf{F}_2(z)\mathbf{F}_4(z)$. 
 The terms in the product $\mathbf{F}_1(z)\mathbf{F}_4(z)$ have exponents $(t^*sd+s(i-1)+s(l-d)-1)$, for $i\in[1,t]$ and $l\in[d+1,d^*]$, which do not include the desired values $(si-1+(l-1)t^*s)$ for $i\in[1,t]$ and $l\in[1,d]$.
 A similar discussion applies to the product $\mathbf{F}_2(z)\mathbf{F}_3(z)$, whose exponents are $(s(i+t^*l-t^*)-1)$, for $i\in[t+1,t^*]$ and $l\in[1,d]$, and $\mathbf{F}_2(z)\mathbf{F}_4(z)$, whose exponents are $(t^*sd+s(i-1)+s(l-d)-1)$, for $i\in[t+1,t^*]$ and $l\in[d+1,d^*]$.
 
 \begin{figure*}[t!]
 	\begin{center}
 		\includegraphics[width=12cm]{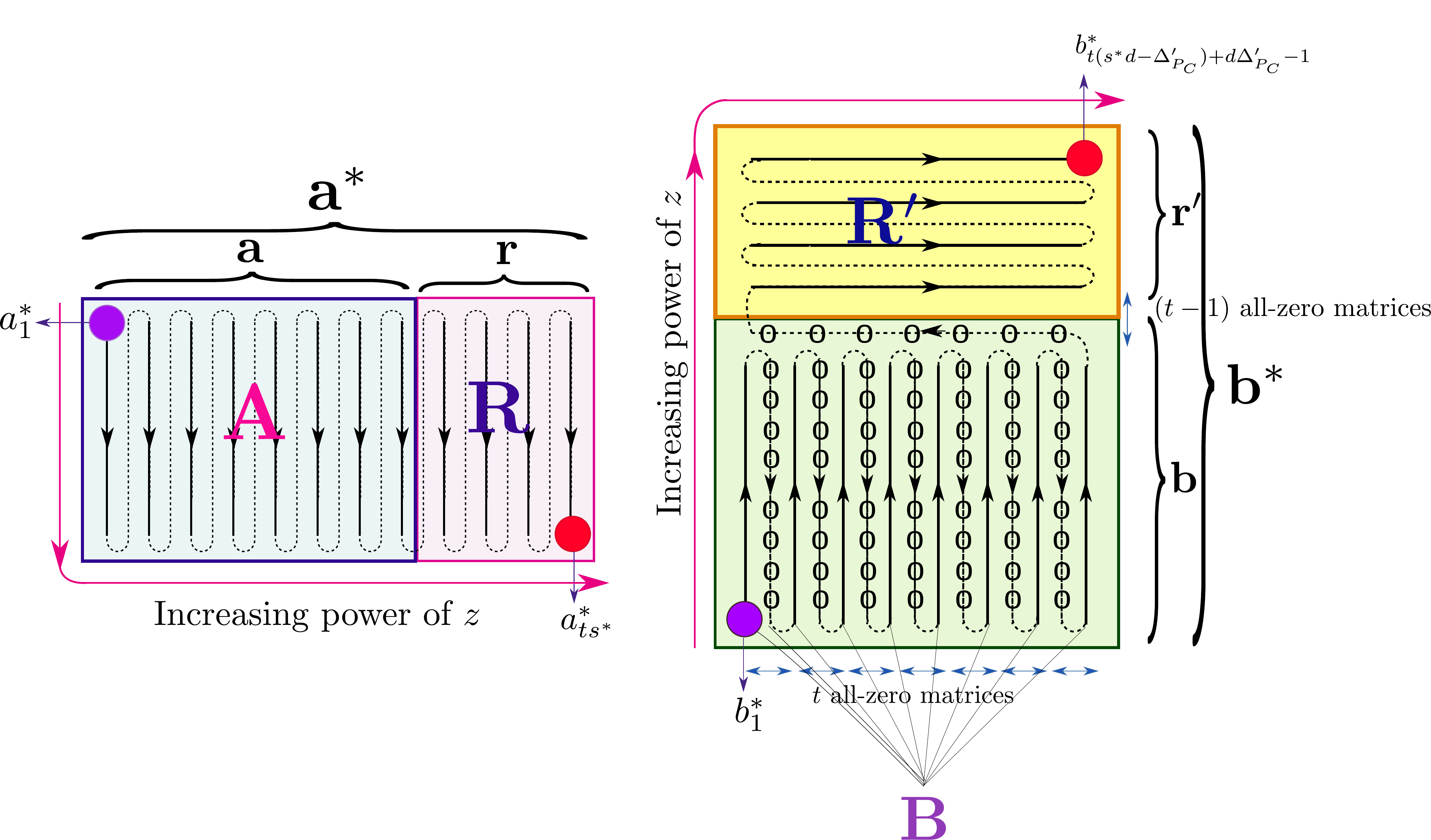}~\caption{\footnotesize
 			Construction of the time block sequences $\mathbf{a}^*$
 			and $\mathbf{b}^*$ in \eqref{eqTimeSeqAR2} and
 			\eqref{eqTimeSeqBB} used to define the secure generalized PolyDot (SGPD) code for  the case $s\geq t$. The solid line and the zero dashed lines in $\mathbf{b}^*$ indicate  columns of $\mathbf{B}$ and all-zero block sequences, respectively.   }~\label{SecGPD2}
 	\end{center}
 	\vspace{-5ex}
 \end{figure*}
 In order to recover the convolution $\mathbf{c}^*$, the master server needs to collect a number of values of the product $\mathbf{F}_\mathbf{a}(z)\mathbf{F}_{\mathbf{b}}(z)$ equal to the length of the sequence $\mathbf{c}^*$, which can be computed as the degree $\deg \left(\mathbf{F}_\mathbf{a}(z)\mathbf{F}_{\mathbf{b}}(z)\right)+1$, where $\deg (\mathbf{F}_\mathbf{a}(z)\mathbf{F}_{\mathbf{b}}(z))$ is
 \begin{equation}\label{eq_deg}
 \begin{cases}
 t^*s(d+1)+s\Delta_{P_C}-1,& \text{ if }\Delta_{P_C}=\frac{P_C}{s},\\
 dst^*-s\Delta_{P_C}+2P_C+t-2,& \text{ if }\Delta_{P_C}>\frac{P_C}{s}.
\end{cases}
 \end{equation}
For $P_C\geq 1$ this implies the recovery threshold $P_R$ in \eqref{RT_1}. 
The communication load $C_L$ in \eqref{eq_CL} follows from the fact that there are $TD/(td)$ entries in $\mathbf{F}_{\mathbf{a}^*}(z_p)\mathbf{F}_{\mathbf{b}^*}(z_p)$, for all $p\in [1,P_R]$.  
 
The security constraint \eqref{eq_security_constraint} can be proved in a manner similar to \cite{chang2018capacity} by the following steps:
\begin{align}\label{eq_sec}
 &I(\mathbf{A},\mathbf{B};\mathbf A_{\mathcal{P}},\mathbf B_{\mathcal{P}})\nonumber\\
 =&H(\mathbf A_{\mathcal{P}},\mathbf B_{\mathcal{P}})-H(\mathbf A_{\mathcal{P}},\mathbf B_{\mathcal{P}}|\mathbf{A},\mathbf{B})\nonumber\\
 \overset{(a)}{=}&H(\mathbf A_{\mathcal{P}},\mathbf B_{\mathcal{P}})-H(\mathbf A_{\mathcal{P}},\mathbf B_{\mathcal{P}}|\mathbf{A},\mathbf{B})\nonumber\\
 &+H(\mathbf A_{\mathcal{P}},\mathbf B_{\mathcal{P}}|\mathbf{A},\mathbf{B},\mathbf{R}_1,\ldots,\mathbf{R}_{P_C},\mathbf{R}'_1,\ldots,\mathbf{R}'_{P_C})\nonumber\\
 \overset{}{=}&H(\!\mathbf A_{\mathcal{P}},\mathbf B_{\mathcal{P}}\!)-I(\!\mathbf A_{\mathcal{P}},\mathbf B_{\mathcal{P}};\mathbf{R}_1,\ldots,\mathbf{R}_{P_C},\mathbf{R}'_1,\ldots,\mathbf{R}'_{P_C}|\mathbf{A},\mathbf{B})\nonumber\\
 \overset{}{=}&H(\mathbf A_{\mathcal{P}},\mathbf B_{\mathcal{P}})-H(\mathbf{R}_1,\ldots,\mathbf{R}_{P_C},\mathbf{R}'_1,\ldots,\mathbf{R}'_{P_C}|\mathbf{A},\mathbf{B})\nonumber\\&+H(\mathbf{R}_1,\ldots,\mathbf{R}_{P_C},\mathbf{R}'_1,\ldots,\mathbf{R}'_{P_C}|\mathbf{A},\mathbf{B},\mathbf A_{\mathcal{P}},\mathbf B_{\mathcal{P}})\nonumber\\
 \overset{(b)}{=}& 
 H(\mathbf A_{\mathcal{P}},\mathbf B_{\mathcal{P}})-H(\mathbf R_{1},\ldots,\mathbf{R}_{P_C},\mathbf R'_{1},\ldots,\mathbf{R}'_{P_C})\nonumber\\
 \overset{(c)}{\leq}&H(\mathbf A_{\mathcal{P}})+H(\mathbf B_{\mathcal{P}})-\sum_{p=1}^{P_C}H(\mathbf{R}_p)-\sum_{p=1}^{P_C}H(\mathbf{R}'_p)\nonumber\\
 \overset{(d)}{=}&H(\mathbf A_{\mathcal{P}})+H(\mathbf B_{\mathcal{P}})-P_C\frac{TS}{m}\log|\mathbb{F}|-P_C\frac{SD}{n}\log|\mathbb{F}|\nonumber\\
 \overset{(e)}{\leq}&\sum_{p=1}^{P_C}H(\mathbf{A}_p)+\sum_{p=1}^{P_C}H(\mathbf{B}_p)-P_C\frac{TS}{m}\log|\mathbb{F}|-P_C\frac{SD}{n}\log|\mathbb{F}|\nonumber\\
 \overset{(f)}{=}&P_C\frac{TS}{m}\log|\mathbb{F}|+P_C\frac{SD}{n}\log|\mathbb{F}|-P_C\frac{TS}{m}\log|\mathbb{F}|\nonumber\\
 &-P_C\frac{SD}{n}\log|\mathbb{F}|\nonumber\\
 =&0,
\end{align}
where $(a)$ follows from the definition of encoding functions, since $\mathbf{A}_{\mathcal{P}}$ is a deterministic function of $\mathbf{A}$ and $\mathbf{R}_p$, and $\mathbf{B}_{\mathcal{P}}$ is a deterministic function of $\mathbf{B}$ and $\mathbf{R}'_p$, respectively, for all $p\in[1,P_C]$; 
$(b)$ follows from \eqref{eqsec1_pAM1li} and \eqref{eqsec1_pBM1li}, since from $P_R$ polynomial evaluations $\mathbf{A}_{\mathcal{P}}$ and $\mathbf{B}_{\mathcal{P}}$ in \eqref{eqsec1_pAM1li} and \eqref{eqsec1_pBM1li} we can recover $2P_C$ unknowns when the coefficients $\mathbf{A}_{i,j}$ and $\mathbf{B}_{k,l}$ are known, given that we have $P_R\geq 2P_C$; $(c)$ and $(d)$ follows since $\mathbf{R}_p$ and $\mathbf{R}'_p$ are independent uniformly distributed entries; 
$(e)$ follows by upper bounding the joint entropy using the sum of individual entropies; and $(f)$ follows from an argument similar to $(d)$. Hence, the proposed scheme is information-theoretically secure. 
\end{proof}
 
\begin{Remark}
When $P_C\geq 1$ a direct application of the GPD construction in Fig.~\ref{figGPD} would yield the larger recovery threshold
\begin{equation}
P_R = 
\begin{cases}
t^*sd^*+s-1,& \text{ if }\Delta_{P_C}=\frac{P_C}{s},\\
dst^*+s-1-2(s\Delta_{P_C}-P_C),&  \text{ if } \Delta_{P_C}>\frac{P_C}{s}.
\end{cases}
\end{equation}
\end{Remark}

\subsection{Secure Generalized PolyDot Code: The $s\geq t$ Case}\label{subsec2}
As illustrated in Fig.~\ref{SecGPD2}, when $s\geq t$, we instead augment input matrices $\mathbf{A}$ and $\mathbf{B}$ by adding 
\begin{equation}\label{delta2}
\Delta'_{P_C}\overset{\Delta}{=} \left\lceil \frac{P_C}{\min{\{ t, d \}}} \right\rceil
\end{equation}
column and row blocks to matrices $\mathbf{A}$ and $\mathbf{B}$. This can be seen to yield a smaller recovery threshold. Accordingly, the $t\times s^*$ augmented block matrix $\mathbf{A}^*=[\mathbf{A}~\mathbf{R}]$ with $s^*=s+\Delta'_{P_C}$ is obtained as 
\begin{equation}\label{eq_matrixA1}
{ 
	\mathbf{A}^*=
	\left[ {\begin{array}{cccccc}
		\mathbf{A}_{1,1}&\ldots&\mathbf{A}_{1,s}&\mathbf{R}_{1,1}&\ldots&\mathbf{R}_{1,\Delta'_{P_C}}\\
		\vdots&\ddots&\vdots&\vdots&\ddots&\vdots\\
		\mathbf{A}_{t,1}&\ldots&\mathbf{A}_{t,s}&\mathbf{R}_{t,1}&\ldots&\mathbf{R}_{t,\Delta'_{P_C}}
		\end{array} } \right] ,}
\end{equation}
while the $s^*\times d$ augmented block matrix $\mathbf{B}^*$ is defined as
\begin{equation}\label{eq_matrixB1}
\mathbf{B}^*=\left[ {\begin{array}{c}
	\mathbf{R}' \\\mathbf{B}
	\end{array} } \right]=
\left[ {\begin{array}{ccc}
	\mathbf{R}'_{\Delta'_{{P_C},1}}&\ldots&\mathbf{R}'_{\Delta'_{{P_C},d}}\\
	\vdots&\ddots&\vdots\\
	\mathbf{R}'_{1,1}&\ldots&\mathbf{R}'_{1,d}\\
	\mathbf{B}_{1,1}&\ldots&\mathbf{B}_{1,d}\\
	\vdots&\ddots&\vdots\\
	\mathbf{B}_{s,1}&\ldots&\mathbf{B}_{s,d}
	\end{array} } \right].
\end{equation}
As for \eqref{eq_matrixA1} and \eqref{eq_matrixB1}, if
$\Delta'_{P_C}-P_C/\min\{t,d\}>0$, the last $s\Delta'_{P_C}-P_C$ block
matrices in \eqref{eq_matrixA1}, with  bottom-to-top right-to-left ordering
in $\mathbf{R}$, and in \eqref{eq_matrixB1} with right-to-left top-to-bottom
ordering in $\mathbf{R}'$, are all-zero block matrices. The construction of
sequences $\mathbf{a}^*$ and $\mathbf{b}^*$ is analogous to the GPD in the non-secure case. In particular, as seen in Fig.~\ref{SecGPD2}, the time block sequence $\mathbf{a}^*$ is  
 \begin{equation}\label{eqTimeSeqAR2}
 \mathbf{a}^*=\{\mathbf{a}_1,\mathbf{r}_1,\mathbf{a}_2,\mathbf{r}_2,\ldots,\mathbf{a}_t,\mathbf{r}_t \},
 \end{equation}
whereas the block sequence $\mathbf{b}^*$ is defined as
 \begin{equation}\label{eqTimeSeqBB}
  \mathbf{b}^*=\{\mathbf{b}_1,\mathbf{0},\mathbf{b}_2,\ldots,\mathbf{0},\mathbf{b}_d,\hat{\mathbf{0}},\mathbf{r}'_{\Delta'_{P_C}}, \ldots,\mathbf{r}'_1 \}.
  \end{equation}
Here, $\mathbf{0}$ and $\hat{\mathbf{0}}$ are a block sequence of $t$ and $t-1$ all-zero block matrices with dimensions $S/s\times D/d$, respectively, while $\mathbf r'_i$ is the $i$th row of the random matrix $\mathbf R'$.
\begin{Theorem}\label{SecurThm2}
For a given security level $P_C<P$, the proposed SGPD code achieves the recovery threshold
\begin{equation}\label{eq_RT2}
P_R=t(s^*d-\Delta'_{P_C})+ts+2P_C-1
\end{equation}
and the communication load \eqref{eq_CL},
where $s^*=s+\Delta'_{P_C}$ for any integer values $t,s,$ and $d$ such that $s\geq t$, $m=ts$, and $n=sd$.
\end{Theorem}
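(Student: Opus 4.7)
The plan is to follow the same structural template as the proof of Theorem~\ref{SecurThm}: express the encoding in terms of $z$-transforms of the block sequences $\mathbf{a}^*$ in \eqref{eqTimeSeqAR2} and $\mathbf{b}^*$ in \eqref{eqTimeSeqBB}, identify each desired product block $\mathbf{C}_{i,l}$ with the coefficient of a specific monomial of $\mathbf{F}_{\mathbf{a}^*}(z)\mathbf{F}_{\mathbf{b}^*}(z)$, count the degree of that polynomial to obtain the recovery threshold, and finally reuse the information-theoretic argument from \eqref{eq_sec} to check the security constraint.

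First, I would decompose the two $z$-transforms along the data/random split:
\begin{align*}
\mathbf{F}_{\mathbf{a}^*}(z) &= \underbrace{\sum_{i=1}^{t}\sum_{j=1}^{s}\mathbf{A}_{i,j}\,z^{s^*(i-1)+j-1}}_{\overset{\Delta}{=}\,\mathbf{F}_1(z)} + \underbrace{\sum_{i=1}^{t}\sum_{j=1}^{\Delta'_{P_C}}\mathbf{R}_{i,j}\,z^{s^*(i-1)+s+j-1}}_{\overset{\Delta}{=}\,\mathbf{F}_2(z)}, \\
\mathbf{F}_{\mathbf{b}^*}(z) &= \mathbf{F}_3(z)+\mathbf{F}_4(z),
\end{align*}
where $\mathbf{F}_3(z)$ collects the monomials associated with the columns $\mathbf{b}_l$ separated by the $\mathbf{0}$ pads, and $\mathbf{F}_4(z)$ collects the monomials of the trailing random rows $\mathbf{r}'_i$ placed after $\hat{\mathbf{0}}$. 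The exponents in $\mathbf{F}_1(z)$ reflect the interleaved positioning of $\mathbf{a}_i$ and $\mathbf{r}_i$ in \eqref{eqTimeSeqAR2}: each row-pair takes up a window of length $s^*=s+\Delta'_{P_C}$, with the $s$ data blocks preceding the $\Delta'_{P_C}$ random ones.

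The central technical step is to verify, for every $i\in[1,t]$ and $l\in[1,d]$, that the coefficient of the middle-sample monomial of $\mathbf{c}_{i,l}=\mathbf{a}_i*\mathbf{b}_l$ in $\mathbf{F}_1(z)\mathbf{F}_3(z)$ is exactly $\mathbf{C}_{i,l}$, and that the three cross products $\mathbf{F}_1\mathbf{F}_4$, $\mathbf{F}_2\mathbf{F}_3$, $\mathbf{F}_2\mathbf{F}_4$ occupy disjoint sets of exponents. The first part reduces to the classical GPD identity because the exponent $s-k$ assigned to $\mathbf{B}_{k,l}$ inside $\mathbf{F}_3$ cancels the $+(k-1)$ appearing in $\mathbf{F}_1$, collapsing all $s$ contributions to a single degree $s^*(i-1)+\mathrm{off}(l)+s-1$ independent of $k$. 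The second part amounts to a case analysis over the exponent ranges of the three cross products, showing that the pad lengths chosen in \eqref{eqTimeSeqBB} are large enough to prevent aliasing between data samples and random-matrix contamination; carrying out this exponent bookkeeping in the interleaved layout, and handling separately the sub-case $\Delta'_{P_C}\min\{t,d\}>P_C$ in which some trailing blocks of $\mathbf{R}$ and $\mathbf{R}'$ are zero and therefore drop out of the leading-degree count, will be the main obstacle. Once non-interference is confirmed, the recovery threshold follows from $P_R=\deg(\mathbf{F}_{\mathbf{a}^*}(z)\mathbf{F}_{\mathbf{b}^*}(z))+1$ and a direct count of maximum exponents yields the claimed expression $P_R=t(s^*d-\Delta'_{P_C})+ts+2P_C-1$. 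The communication load \eqref{eq_CL} is unchanged because each worker still returns a matrix of size $(T/t)\times(D/d)$.

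For security, I would reproduce the entropy chain \eqref{eq_sec} from the proof of Theorem~\ref{SecurThm} line by line, with the only modification being the inequalities used in step~$(b)$ to ensure that the $P_C$ colluding evaluations uniquely determine $P_C$ independent random entries. Concretely, any $P_C$ evaluations of $\mathbf{F}_{\mathbf{a}^*}(z)$ yield, conditioned on $\mathbf{A}$, a Vandermonde linear system in $P_C$ of the $t\Delta'_{P_C}$ entries of $\mathbf{R}$, which is invertible over $\mathbb{F}$ as $|\mathbb{F}|>P$; the same holds for $\mathbf{F}_{\mathbf{b}^*}(z)$ using the $d\Delta'_{P_C}$ entries of $\mathbf{R}'$. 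The definition $\Delta'_{P_C}=\lceil P_C/\min\{t,d\}\rceil$ in \eqref{delta2} is exactly the smallest integer for which both $t\Delta'_{P_C}\geq P_C$ and $d\Delta'_{P_C}\geq P_C$ hold, so steps~$(c)$--$(f)$ of \eqref{eq_sec} go through verbatim and give $I(\mathbf{A},\mathbf{B};\mathbf{A}_{\mathcal{P}},\mathbf{B}_{\mathcal{P}})=0$, completing the proof.
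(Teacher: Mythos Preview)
Your overall template---split each transform into data and random parts, locate $\mathbf{C}_{i,l}$ as a specific monomial of $\mathbf{F}_{\mathbf{a}^*}(z)\mathbf{F}_{\mathbf{b}^*}(z)$, count degrees for $P_R$, and reuse the entropy chain \eqref{eq_sec}---is exactly the paper's. The concrete exponent assignment you choose for $\mathbf{F}_{\mathbf{a}^*}$, however, is not the one the paper's proof uses, and combined with the zero-padding of \eqref{eqTimeSeqBB} it fails the non-interference step.

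You read $\mathbf{A}^*$ row by row, placing $\mathbf{A}_{i,j}$ at exponent $s^*(i-1)+j-1$. The paper's proof instead reads $\mathbf{A}^*$ column by column, placing $\mathbf{A}^*_{i,j}$ at exponent $i-1+t(j-1)$ (see \eqref{eqsec2_pAM1li}), and correspondingly places $\mathbf{B}^*_{k,l}$ at $(s^*-k)t+ts^*(l-1)$ with intra-column spacing $t$ and inter-column stride $ts^*$ (see \eqref{eqsec2_pBM1li}). The short pads of lengths $t$ and $t-1$ in \eqref{eqTimeSeqBB} are calibrated to that column-major layout; under your row-major layout they are too short. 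Concretely, take $s=3$, $t=d=2$, $P_C=1$, so $\Delta'_{P_C}=1$ and $s^*=4$, and use your exponents together with $\mathrm{off}(l)=(l-1)(s+t)$. The monomial meant to carry $\mathbf{C}_{2,1}$ sits at degree $6$, but $\mathbf{A}_{1,1}\mathbf{B}_{2,2}$ and $\mathbf{A}_{1,2}\mathbf{B}_{3,2}$ also land at degree $6$: the row-shift/column-shift mismatch $s^*-(s+t)=\Delta'_{P_C}-t=-1$ lies in the admissible range $[-(s-1),s-1]$ of $j-k$ offsets, so cross-column aliasing is unavoidable. Your assertion that ``the pad lengths chosen in \eqref{eqTimeSeqBB} are large enough to prevent aliasing'' therefore does not hold for the exponent scheme you wrote down.

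The paper's column-major choice buys two things that make the rest of the argument immediate: the data coefficients of $\mathbf{A}$ occupy the contiguous block of exponents $[0,ts-1]$ while the random coefficients occupy $[ts,ts+P_C-1]$ after the prescribed zeroing, so $\deg\mathbf{F}_{\mathbf{a}^*}=ts+P_C-1$ is read off directly; and the intra-column spacing $t$ in $\mathbf{F}_{\mathbf{b}^*}$ together with the inter-column stride $ts^*$ matches this structure and keeps each desired monomial $i-1+t(s^*l-1)$ isolated. To repair your plan you should either adopt the exponents of \eqref{eqsec2_pAM1li}--\eqref{eqsec2_pBM1li}, or keep your row-major $\mathbf{a}^*$ but redesign the offsets in $\mathbf{b}^*$ (the inter-column gap must grow from $t$ to roughly $s^*t$) and then redo the degree count to verify that you still obtain $P_R=t(s^*d-\Delta'_{P_C})+ts+2P_C-1$.
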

\begin{proof}
We define the $z$-transform of sequences $\mathbf{a}^*$ and $\mathbf{b}^*$ respectively as
\begin{align}
\mathbf{F}_{\mathbf{a}^*}(z)
&=\sum_{i=1}^{t}\sum_{j=1}^{s} \mathbf{A}^*_{i,j} z^{i-1+t(j-1)}\nonumber\\
&+\sum_{i=1}^{t}\sum_{j=s+1}^{s^*} \mathbf{A}^*_{i,j} z^{i-1+t(j-1)},\label{eqsec2_pAM1li}\\
\mathbf{F}_{\mathbf{b}^*}(z)
&=\sum_{k=1+\Delta_{P_C}'}^{s^*}\sum_{l=1}^{d}\mathbf{B}^*_{k,l}z^{(s^*-k)t+ts^*(l-1)}\nonumber\\
&+\sum_{k=1}^{\Delta_{P_C}'}\sum_{l=1}^{d}\mathbf{B}^*_{k,l} z^{t(s^*d-\Delta_{P_C}')+d(\Delta_{P_C}'-k)+l-1} \label{eqsec2_pBM1li}.
\end{align}  
The $(i,l)$ block $\mathbf{C}_{i,l}=\sum_{r=1}^{s}\mathbf{A}_{i,r}\mathbf{B}_{r,l}$, for all $i\in[1,t]$ and $l\in[1,d]$, of matrix $\mathbf{C}=\mathbf{AB}$ can be seen equal to the $(i-1+t(s^*l-1))$th sample of the convolution $\mathbf{c}^*=\mathbf{a}^**\mathbf{b}^*$.
The rest of the proof follows in a manner akin to Theorem \ref{SecurThm}.
\end{proof}
\begin{Remark}\label{remark2}
The computational complexity of SGPD codes for both workers and master server can be summarized as follows.
Each worker is assigned to compute the multiplication $\mathbf C_p=\mathbf A_p\mathbf B_p$, requiring $TSD/(tsd)$ multiplications. For the master server, encoding matrices $\mathbf A_p$ and $\mathbf B_p$ at each worker amounts to evaluating $z$-transforms $\mathbf F_{\mathbf a^*}(z)$ and $\mathbf F_{\mathbf b^*}(z)$ at a random point $z_p$. This requires multiplying $z_p$ by $(ts+P_C)$ and $(sd+P_C)$ submatrices, each of dimension $T/t\times S/s$ and $S/s\times D/d$, respectively. This requires $P_C(TS/(ts)+SD/(sd))+TS+SD$ multiplications. Overall, the master server needs to carry out $ PP_C(TS/(ts)+SD/(sd))+P(TS+SD)$ multiplications.
For decoding, the master server interpolates a polynomial degree $P_R-1$ for each element in $\mathbf C$. Using a polynomial interpolation algorithm, the decoding complexity amounts to $(P_R-1)(\log(P_R-1))^2TD/(td)$ multiplications \cite{kung2009fast}.  
\end{Remark}
\begin{Example}
We now provide some numerical results of the proposed SGPD scheme. We set $P=3000$ workers and parameters $m=n=36$. The trade-off between communication load $C_L$ and recovery threshold $P_R$ for both non-secure conventional GPD codes $(P_C=0)$ and proposed SGPD code with colluding workers $P_C=11$ and $P_C=29$ is illustrated in Fig.~\ref{figSecRT_CO}. 
The figure quantifies the loss in terms of achievable pairs $(P_R,C_L)$ that is caused by the security constraint. 
\end{Example}
\begin{figure}[t!]
	\begin{center}\vspace{-1.3ex}
		\includegraphics[scale=0.6]{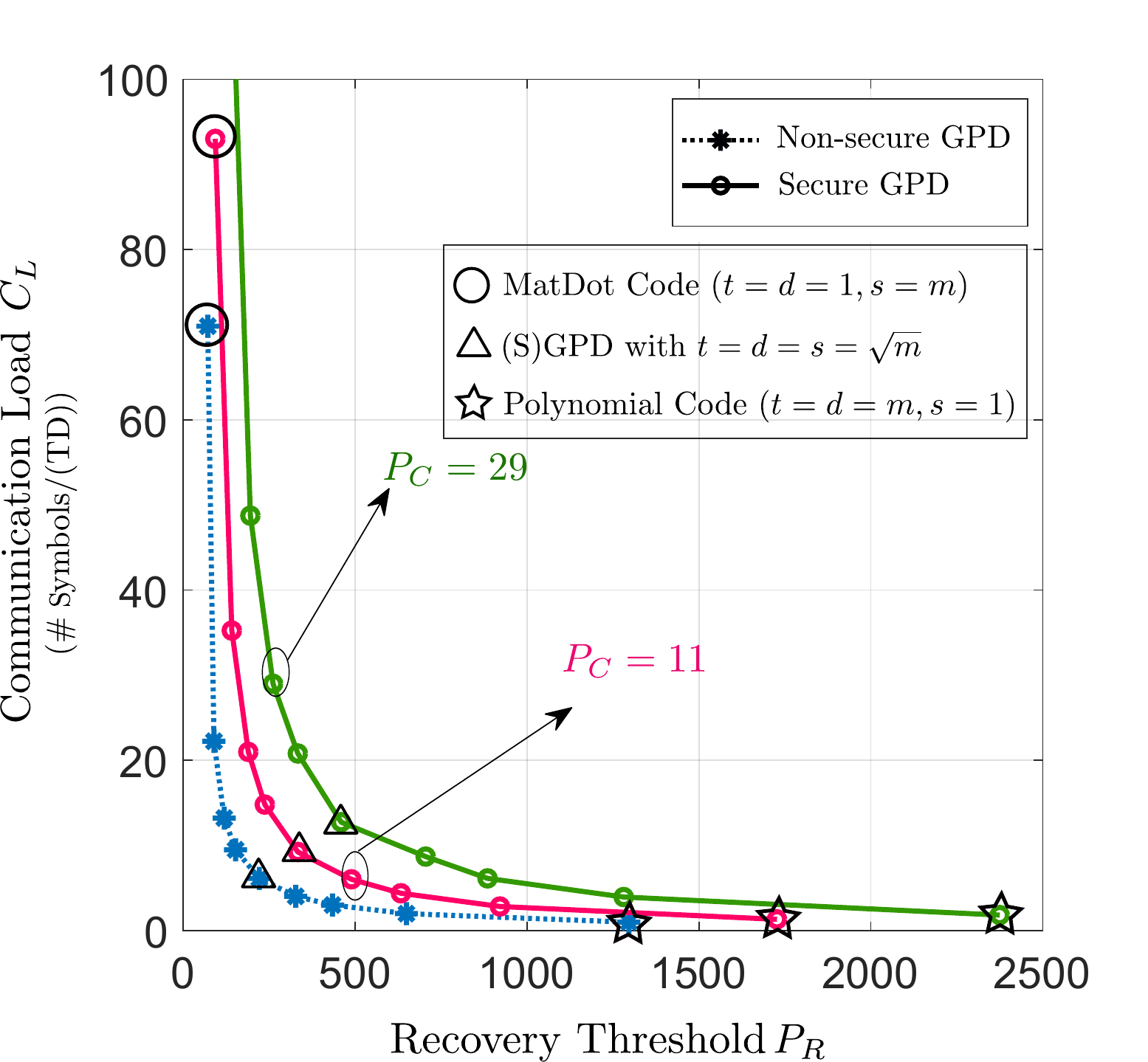}\vspace{-.1cm}~\caption{\footnotesize{Communication
				load $C_L$ versus recovery threshold $P_R$ for both
				non-secure generalized PolyDot (GPD) and secure
				generalized PolyDot (SGPD) codes ($m=n=36$ and $P=3000$ workers). 
		}}~\label{figSecRT_CO}
	\end{center}
	\vspace{-5ex}
\end{figure}
\subsection{Trading Off Computation and Communication Latencies}\label{sec_time} 
In this subsection, we elaborate on the importance of enabling a flexible trade-off between communication load and recovery threshold by analyzing the overall completion time for the matrix multiplication task at hand. The completion delay is the sum of latencies due to computation and communication.

To this end, following a well-established model \cite{lee2017speeding},\cite{aliasgari2018coded}, we assume that computation at each worker $p$ requires a random time $T_p^{\text{comp}}$, measured in some specified unit of time, that is modeled as a shifted exponential distribution with cumulative distribution function (cdf) 
\begin{equation}\label{eq_time}
F^{\text{comp}}(T^{\text{comp}})=1-\exp \left( -\frac{\mu TSD}{tsd}  (T^{\text{comp}}-T_{\min}^{\text{comp}}) \right),
\end{equation}
for $T\geq T_{\min}^{\text{comp}}$ and $F^{\text{comp}}(T)=0$ otherwise. According to \eqref{eq_time}, the parameter $T_{\min}^{\text{comp}}$
represents the minimum processing time, and $1/\mu$ represents the average excess computing time, with respect to $T_{\min}^{\text{comp}}$, per multiplication (recall Remark \ref{remark2}).
Assuming independent computing times, for a given recovery threshold $P_R$, the computation time $T^{\text{comp}}$ is hence given as the $P_R$th-order statistic, i.e., the $P_R$th smallest variable, among the i.i.d. variables $(T_{1}^{\text{comp}},\ldots,T_{P}^{\text{comp}})$.   
Its expectation is given by \cite{ross2014introduction}
\begin{figure}[t!]
	\begin{center}
		\includegraphics[scale=0.6]{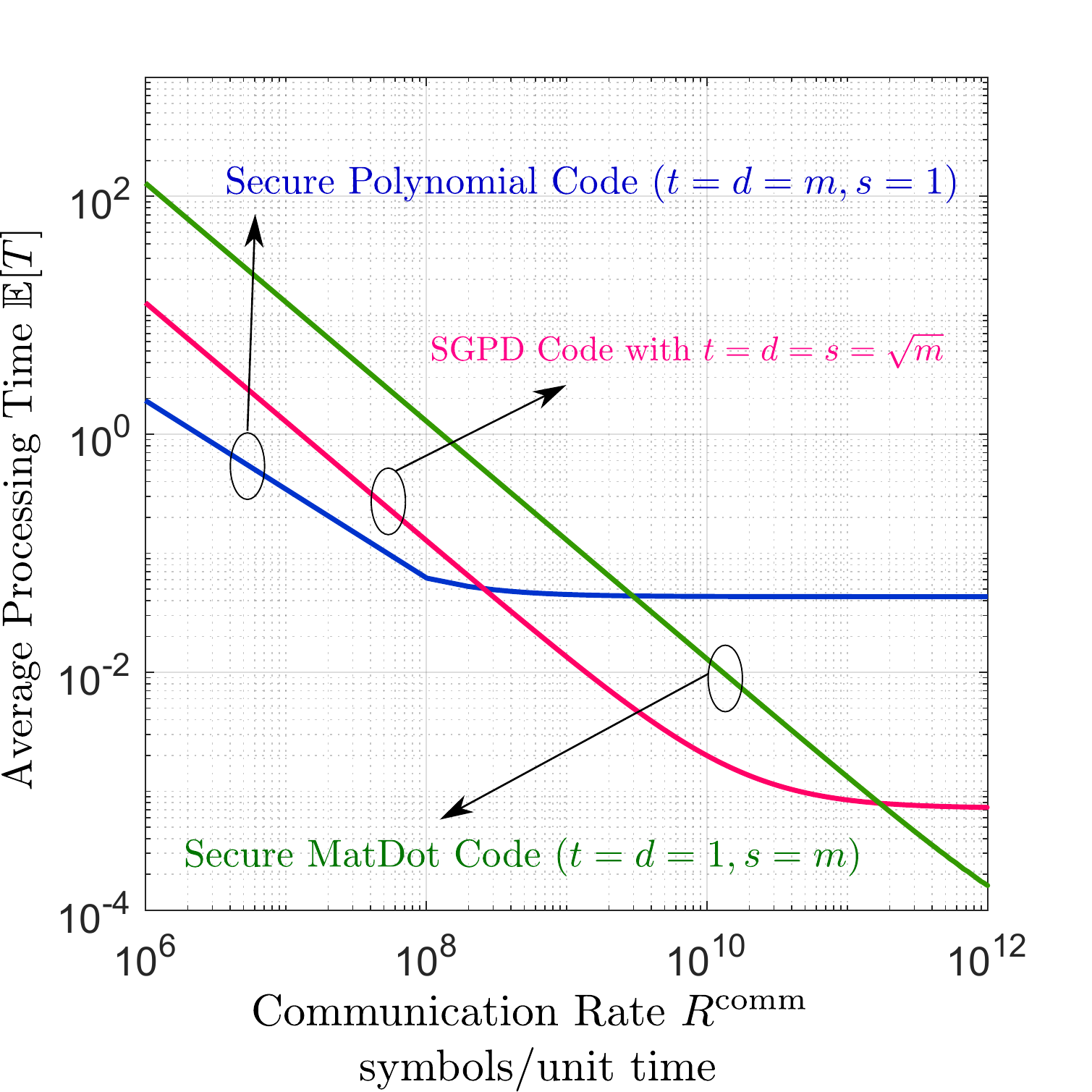}\vspace{-.01cm}~\caption{\footnotesize{Average completion time $\mathbb{E}[T]$ versus communication rate $R^{\text{comm}}$ for secure generalized PolyDot (SGPD) codes with $P=3000$, $P_C=29$, $T=S=D=1008$, $\mu=0.5\times 10^{-4}$, and $T^{\text{comp}}=1$, and $m=n=36$: \emph{(i)} $t=d=36$, $s=1$ (SGPD code), \emph{(ii)} $t=s=d=6$, and \emph{(iii)} $t=d=1$, $s=36$ (secure MatDot code). }~\label{SMMtime}
		}
	\end{center}
	\vspace{-5ex}
\end{figure}
\begin{equation}
\mathbb{E}[T^{\text{comp}}]\!=\!\frac{tsd}{\mu TSD}\sum_{i=1}^{P_R}\frac{1}{P\!-\!P_R\!+\!i}\!=\!\frac{tsd}{\mu TSD }(H_P-H_{P\!-\!P_R}),
\end{equation}
where $H_P$ is the generalized harmonic number defined as $H_P=\sum_{i=1}^{P}1/i$.

Suppose now that the workers communicate with the master server are a link with an overall download rate $R^{\text{comm}}$ (symbols per unit time). The communication latency is hence given as 
\begin{equation}
T^{\text{comm}}=P_R\frac{TD}{tdR^{\text {comm}}},
\end{equation}
since the workers need to return $P_RTD/(td)$ symbols to the master server.
Overall, the average completion time  is given as 
\begin{equation}\label{eq_completionTime}
\mathbb{E}[T]=T_{\min}^{\text{comp}}+\frac{tsd}{\mu TSD }(H_P-H_{P-P_R})+P_R\frac{TD}{ tdR^{\text {comm}}}.
\end{equation}
\begin{Example}
Let consider $P=3000$ workers and parameters $m=n=36$. We assume that $P_C=29$, $T=S=D=1008$, $\mu=0.5\times10^{-4}$, and $T_{\min}^{\text{comm}}=1$. We compare the performance of the following SGPD codes: \emph{(i)} $t=d=36$ and $s=1$ (secure Polynomial code); \emph{(ii)} $t=s=d=6$; {\emph{(iii)}} $t=d=1$ and $s=36$ (secure MatDot code).
The values of $C_L$ and $P_R$ for these codes are shown in Fig.~\ref{figSecRT_CO}.
The average completion time \eqref{eq_completionTime} is plotted versus the communication rate $R^{\text{comm}}$ in Fig.~\ref{SMMtime}.
The figure shows that the optimal choice of the latency-minimizing SGPD code along the curve in Fig.~\ref{figSecRT_CO} depends on the system's operating point: For small communication rates, it is preferable to reduce the communication load $C_L$, and hence secure Polynomial codes are the best choice; while for large communication rate, it is optimal to choose codes with an increasingly large value of the communication load $C_L$.
\end{Example}

\section{Secure and Private Generalized PolyDot Code}\label{Sec_sec_Pri}
   
In this section, we study the setup shown in Fig.~\ref{FigSyst}. We propose a variant of the private and secure GPD code introduced in \cite{kim2019private} that we refer to as private and secure GPD (PSGPD) code. Note that in \cite{kim2019private} a private coded matrix multiplication scheme is proposed only for Polynomial codes with $s=1$ in \eqref{Polydot}. 
We derive the corresponding achievable set of pairs $(P_R,C_L)$ as defined in Section \ref{secModel} under the condition $P_C=1$, i.e., the workers do not collude.

 \begin{Theorem}\label{Thm3PSGPD}
For a given security level $P_C=1$, there is an achievable PSGPD codes with the recovery threshold 
\begin{equation}\label{eq_RT3}
P_R = 
\begin{cases}
s(t+1)d,& \text{ if } s<t,\\
ts(d+1)-t+1 ,&  \text{ if } s\geq t,
\end{cases}
\end{equation}
and the communication load \eqref{eq_CL}, 
for any integer values $t,s$, and $d$ such that $m=ts$, and $n=sd$.
\end{Theorem}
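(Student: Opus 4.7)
The plan is to construct PSGPD by composing the secure GPD encoding from Section \ref{sec_SecPDC} (at $P_C=1$) for the confidential matrix $\mathbf{A}$ with a polynomial-based PIR query mechanism, extending the approach of \cite{kim2019private} beyond the Polynomial-code ($s=1$) special case. The crucial asymmetry relative to Section \ref{sec_SecPDC} is that only the \emph{identity} of $\mathbf{B}^{(\kappa)}$ must be kept private, not its content, because the library $\mathcal{B}$ is public; no random padding is therefore needed on the $\mathbf{B}$ side, and eliminating this padding is precisely what allows \eqref{eq_RT3} to be strictly smaller than the SGPD recovery thresholds \eqref{RT_1}--\eqref{eq_RT2} specialized to $P_C=1$.

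For the two regimes $s<t$ and $s\geq t$, I would follow the SGPD constructions of Sections \ref{subsec1} and \ref{subsec2}, respectively, to encode $\mathbf{A}$: append one random row-block (resp.\ column-block) of i.i.d.\ uniform entries, forming the augmented matrix $\mathbf{A}^*$ and the time sequence $\mathbf a^*$ of \eqref{eqTimeSeqAR} (resp.\ \eqref{eqTimeSeqAR2}), and send $\mathbf A_p^{(\kappa)}=\mathbf F_{\mathbf a^*}(z_p)$ to each worker $p$. For the $\mathbf{B}$ side, the master generates $L$ query polynomials $f_r^{(\kappa)}(z)$, $r\in[1,L]$, in which the $\kappa$-th entry carries a distinguished monomial while the others are random polynomials with uniform i.i.d.\ coefficients over $\mathbb F$, and sends worker $p$ the query vector $\mathbf q_p^{(\kappa)}=(f_1^{(\kappa)}(z_p),\ldots,f_L^{(\kappa)}(z_p))$. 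Worker $p$ then forms, using the public library and the standard GPD encoding polynomials $\mathbf F_{\mathbf b^{(r)}}(z)$, the coded matrix $\mathbf B_p^{(\kappa)}=\sum_r q_{p,r}^{(\kappa)}\mathbf F_{\mathbf b^{(r)}}(z_p)$ and returns $\mathbf C_p^{(\kappa)}=\mathbf A_p^{(\kappa)}\mathbf B_p^{(\kappa)}$, which is the evaluation at $z_p$ of the single polynomial $\mathbf F_{\mathbf a^*}(z)\sum_r f_r^{(\kappa)}(z)\mathbf F_{\mathbf b^{(r)}}(z)$.

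Three properties must then be verified. For \emph{correct decoding}, the master interpolates this polynomial from $P_R$ worker responses and reads the desired blocks $\mathbf C_{i,l}^{(\kappa)}$ from the coefficients attached to the term $\mathbf F_{\mathbf a^*}(z)\mathbf F_{\mathbf b^{(\kappa)}}(z)$, which by the interference-avoidance argument used beneath \eqref{eqsec1_pBM1li} in Theorem \ref{SecurThm} occupies monomial positions disjoint from both the random-row/column contribution of $\mathbf A$ and the random-query contribution; a careful degree count, performed separately for $s<t$ and $s\geq t$, yields the closed forms $P_R=s(t+1)d$ and $P_R=ts(d+1)-t+1$ in \eqref{eq_RT3}, while the communication load \eqref{eq_CL} is unchanged because each worker still returns a single $T/t\times D/d$ matrix. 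For \emph{secrecy of} $\mathbf A$ in \eqref{eq_security}, the chain \eqref{eq_sec} applies verbatim at $P_C=1$ once one observes that $(\mathbf q_p^{(\kappa)},\mathbf B_p^{(\kappa)},\mathcal B)$ is conditionally independent of $\mathbf A$ given $\mathbf A_p^{(\kappa)}$, and $\mathbf A_p^{(\kappa)}$ is exactly the SGPD encoding. For \emph{privacy of} $\kappa$ in \eqref{eq_privacy}, each $f_r^{(\kappa)}(z_p)$ is a uniform random element of $\mathbb F$ shifted by a $\kappa$-dependent deterministic term, so $\mathbf q_p^{(\kappa)}$ has uniform marginal distribution over $\mathbb F^L$ independently of $\kappa$; since $\mathbf A_p^{(\kappa)}$ does not itself depend on $\kappa$ in this construction and $\mathbf C_p^{(\kappa)}$ is deterministic given $(\mathbf q_p^{(\kappa)},\mathbf A_p^{(\kappa)},\mathcal B)$, the distributional identity \eqref{eq_privacy} follows.

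The main obstacle, and the technical heart of the proof, is choosing the specific degrees and shifts inside the query polynomials $f_r^{(\kappa)}(z)$ together with the exponent gaps in $\mathbf F_{\mathbf a^*}(z)$ and $\mathbf F_{\mathbf b^{(r)}}(z)$ so that the total polynomial has degree exactly $s(t+1)d-1$ in the $s<t$ case and $ts(d+1)-t$ in the $s\geq t$ case, with no wasted monomial positions; this bookkeeping is what produces the two distinct closed forms in \eqref{eq_RT3} and mirrors the dichotomy between Sections \ref{subsec1} and \ref{subsec2}. Once the correct degrees are settled, the remaining steps are direct adaptations of the templates used in the proofs of Theorems \ref{SecurThm} and \ref{SecurThm2}.
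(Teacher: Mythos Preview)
Your plan diverges from the paper's proof in the PIR mechanism, and the divergence creates a genuine gap in the correct-decoding step.

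The paper, following \cite{kim2019private}, sends worker $p$ a query vector of \emph{evaluation points}
\[
\mathbf q_p^{(\kappa)}=[z_{1},\ldots,z_{\kappa-1},z_{\kappa,p},z_{\kappa+1},\ldots,z_{L}],
\]
where the entries $z_r$ for $r\neq\kappa$ are identical across all workers and only $z_{\kappa,p}$ varies with $p$. Worker $p$ evaluates each $\mathbf F_{\mathbf B^{(r)}}$ at \emph{its own} point $[\mathbf q_p^{(\kappa)}]_r$ and sums; consequently the undesired contribution $\sum_{r\neq\kappa}\mathbf F_{\mathbf B^{(r)}}(z_r)$ is a single \emph{constant} matrix $\mathbf K$, the same for every worker. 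The returned value is then $\mathbf F_{\mathbf A}(z_{\kappa,p})[\mathbf F_{\mathbf B^{(\kappa)}}(z_{\kappa,p})+\mathbf K]$, a polynomial in $z_{\kappa,p}$ whose degree is exactly that of $\mathbf F_{\mathbf A}\cdot\mathbf F_{\mathbf B^{(\kappa)}}$, giving \eqref{eq_RT3} immediately. Note in particular that $\mathbf A_p^{(\kappa)}=\mathbf F_{\mathbf A}(z_{\kappa,p})$ \emph{does} depend on $\kappa$; this coupling of the $\mathbf A$-evaluation point to the $\kappa$-th query entry is what makes the scheme work.

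Your mechanism is structurally different: you evaluate every $\mathbf F_{\mathbf b^{(r)}}$ at the \emph{same} point $z_p$ and use the query entries $f_r^{(\kappa)}(z_p)$ as multiplicative weights, so the returned value is an evaluation of $\mathbf F_{\mathbf a^*}(z)\sum_r f_r^{(\kappa)}(z)\mathbf F_{\mathbf b^{(r)}}(z)$. If the $f_r^{(\kappa)}$ are degree-$0$ (which is forced if you want to hit the degree bound $s(t+1)d-1$), the undesired term $\sum_{r}u_r\mathbf F_{\mathbf b^{(r)}}(z)$ has exactly the same exponent support as $\mathbf F_{\mathbf b^{(\kappa)}}(z)$, so after interpolation the master obtains $\mathbf C_{i,l}^{(\kappa)}+\sum_r u_r\,\mathbf C_{i,l}^{(r)}$ at every desired position. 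Stripping this interference requires computing $\sum_r u_r\,\mathbf A\mathbf B^{(r)}$, i.e.\ essentially the full workload the scheme is meant to offload; and the strict decodability condition \eqref{eq_decidability_constraintPri}, which conditions only on $\{\mathbf C_p^{(\kappa)}\}$, fails outright. If instead you give $f_\kappa^{(\kappa)}$ a positive-degree ``distinguished monomial'' to shift the desired term clear of the interference, the total degree strictly exceeds $s(t+1)d-1$ and you can no longer attain \eqref{eq_RT3}. This is precisely the ``main obstacle'' you flag, and under your weight-based query model it cannot be resolved: the fix is to switch to the paper's point-based query, which collapses the interference to a constant and removes the obstacle entirely.
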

\begin{proof} 
The proof is presented in Appendix \ref{app}.
\end{proof}
\begin{Remark}
The computational complexity of PSGPD codes for both workers and master server is summarized as follows. 	
In PSGPD codes, each worker has two duties, namely encoding the library $\mathcal B$ and computing the multiplication $\mathbf C_p^{(\kappa)}=\mathbf A_p^{(\kappa)}\mathbf B^{(\kappa)}_p$. Encoding the library, i.e., computing the matrix $\mathbf B_p^{(\kappa)}$ in \eqref{eqB}, requires to evaluate $\mathbf F_{\mathbf B^{(r)}}(z)$, $r\in[1,L]$ at query vector $\mathbf q_p^{(\kappa)}$. Hence, the former task requires $LSD$ multiplications, while the latter entails $TSD/(tsd)$ multiplications. In total, each worker carries out $LSD+TSD/(tsd)$ multiplications.
The master server encodes matrix $\mathbf A_p^{(\kappa)}$ with $(1+ts)TS/(ts)$ multiplications. In total, for all $P$ workers, the master server needs $P(1+ts)TS/(ts)$ multiplications.
The computation complexity of the decoding complexity of the master server is the same as for SGPD codes, namely $\mathcal O((P_R-1)(\log(P_r-1))^2TD/(td)))$. 
\end{Remark}
\begin{Example}
Let us consider $P=3000$ workers and parameters $m=n=36$. We assume that $P_C=1$ in order to compare the performance of proposed SGPD and PSGPD codes. Note that both recovery threshold and communication load of the PSGPD code do not depend on the number of public matrices  $|\mathcal B|=L$ in the library. The trade-off between communication load $C_L$ and recovery threshold $P_R$ is illustrated in Fig.~\ref{figpriSecRT_CO} for both codes. The figure shows that, for a fixed value of $P_R$, the resulting achievable value of the communication load $C_L$ is smaller for PSGPD than for SGPD codes.
This suggests that the privacy requirement on the index $\kappa$ imposed by PSGPD is less demanding than the security constraint on matrix $\mathbf B$ under which SGPD codes operate.
\end{Example}
\begin{Remark}
	As for SGPD codes, the overall average completion time of PSGPD codes can be derived following the same steps as described in Section \ref{sec_time}.
\end{Remark}


\begin{figure}[t!]
	\begin{center}
		\includegraphics[scale=0.64]{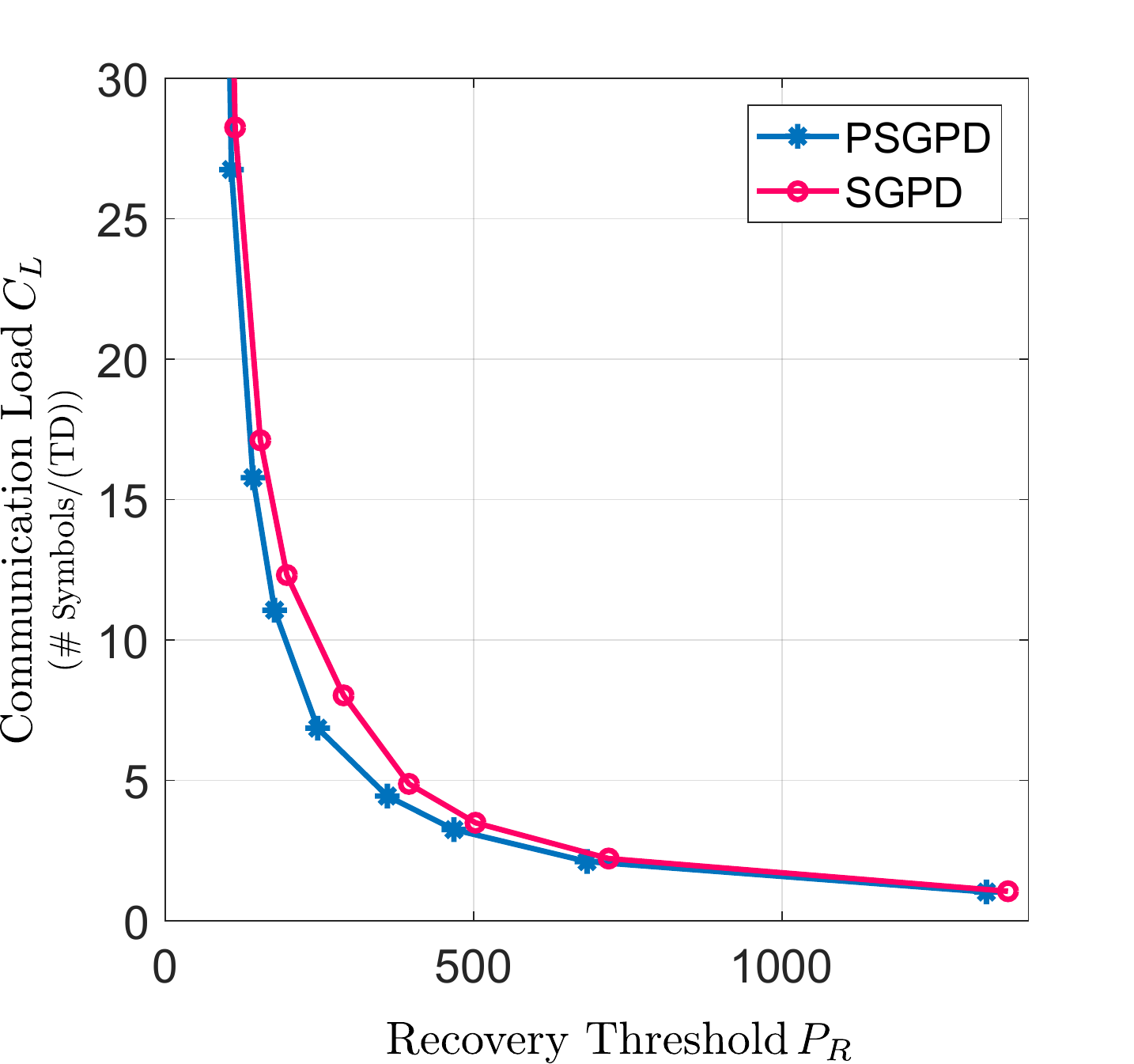}\vspace{-.1cm}~\caption{\footnotesize{Communication
				load $C_L$ versus recovery threshold $P_R$ for 		secure generalized PolyDot (SGPD) codes with $P_C=1$ and private and secure 	generalized PolyDot (PSGPD) codes ($m=n=36$ and $P=3000$ workers).
		}}~\label{figpriSecRT_CO}
	\end{center}
	\vspace{-5ex}
\end{figure}
\section{Concluding Remarks}\label{Sec_Con}
 
In this work, we have considered the problem of secure and private distributed matrix multiplication on $\mathbf C=\mathbf {AB}$ in terms of design of computational codes for two settings. In the first setting, the two matrices $\mathbf A$ and $\mathbf B$ contain confidential data and must be kept secure from the workers; and in the second setting
, matrix $\mathbf A$ is confidential, while matrix $\mathbf B$ is selected in a private manner from a library of public matrices.
For both problems, this work presents the best currently known trade-off between communication load and recovery threshold. This is done by presenting two code constructions that generalize the state-of-the-art GPD codes  \cite{fahim2017optimal,dutta2018optimal,dutta2018unified}, in combination with PIR based codes \cite{kim2019private}. 

Among important items for future research, we mention the extension of the proposed PSGPD construction to $P_C>1$. Here, we note that one can design an achievable PSGPD scheme for any arbitrary privacy level by trivially concatenating a robust PIR scheme for arbitrary colluding workers and private databases \cite{sun2017capacity} and the proposed SGPD code. However, this approach would require multiplying the data matrix $\mathbf A$ with all $L$ public matrices in the set $\mathcal B=\{\mathbf B^{(r)}\}_{r=1}^L$ for each worker $p\in [1,P]$, implying a significantly increased computation load.
Future work will focus on PSGPD schemes for any number of colluding workers that provides a smaller computational complexity at the workers. Finally, the establishment of a converse bound and the consideration of non-perfect communication channels between workers and master server \cite{ha2018wireless} are open problems.

	\appendices
	
	\section{Proof of Theorem \ref{Thm3PSGPD}}\label{app} 

	We start by discussing the $s<t$ case, as done in Section \ref{sec_SecPDC}. The polynomial encoding function for the input matrix $\mathbf A$, is obtained is defined as in \eqref{eqsec1_pAM1li} for $P_C=1$, that is
	\begin{align} 
	\mathbf{F}_{\mathbf{A}}(z)&
	=\sum_{i=1}^{t}\sum_{j=1}^{s} \mathbf{A}_{i,j}z^{s(i-1)+(j-1)}+\mathbf{R}  z^{st},\label{eqa}
	\end{align}
	where we recall that $\mathbf{R}$ is an $T/t\times S/s$ random matrix with i.i.d. uniform random elements in $\mathbb F$. The encoded matrices are given as $\mathbf A_p^{(\kappa)}=\mathbf F_{\mathbf A}(z_{\kappa,p})$ for values $z_{\kappa,p}$ to be discussed below.
	For the desired index $\kappa$, the master server also computes the query vector $\mathbf{q}_p^{(\kappa)}$ for all $p\in [1,P]$. This is obtained as
	\begin{equation}\label{eq_query}
	\mathbf q_p^{(\kappa)}=[z_{1},\ldots,z_{\kappa-1},z_{\kappa,p},z_{\kappa+1},\ldots,z_{L}],
	\end{equation} 
	where all points $\{z_i\}_{i\neq \kappa}$ are selected uniformly i.i.d. from $\mathbb F$ but are identical for all $p$. The points $\{z_{\kappa,p}\}_{p=1}^P$ are selected i.i.d. as distinct elements from $\mathbb F$ (recall that we have $|\mathbb F|>P)$. We note that, as in the PIR scheme \cite{kim2019private}, the query vector \eqref{eq_query} does not leak any information on index $\kappa$ in the sense defined by condition \eqref{eq_privacy}.
	The master server evaluates $\mathbf F_{\mathbf A}(z)$ in \eqref{eqa} at the distinct random point $z_{\kappa,p}$, to produce the encoded matrices $\mathbf A_p^{(\kappa)}=\mathbf F_{\mathbf A}(z_{\kappa,p})$, and then sends $\mathbf A_p^{(\kappa)}$ along with the query vector $\mathbf q_p^{(\kappa)}$ to worker $p\in[1,P]$.   
	
	Each worker $p$, after receiving the query vectors $\mathbf q_p^{(\kappa)}$, encodes the library $\mathcal B$ into a matrix $\mathbf B_p^{(\kappa)}$ as follows. 
	Define the polynomial encoding function for each matrix $\mathbf B^{(r)}$, $r\in [1,L]$, in the library $\mathcal B$ as in \eqref{eqsec1_pBM1li} for $P_C=0$, i.e.,
	\begin{align}
	\mathbf{F}_{\mathbf{B}^{(r)}}(z)&
	= \sum_{k=1}^{s}\sum_{l=1}^{d} \mathbf{B}_{k,l}^{(r)}z^{s-k+(l-1)s(t+1)}.\label{eq_1_B}
	\end{align}
	Each worker $p$ computes the encoded matrices as
	\begin{align}\label{eqB}
	\mathbf{B}_p^{(\kappa)}\overset{\Delta}{=}&
	\sum _{r\in[1,L]}\mathbf{F}_{\mathbf{B}^{(r)}}([\mathbf q^{(\kappa)}_p]_r)\nonumber\\
	=&
	\mathbf{F}_{\mathbf{B}^{(\kappa)}}(z_{\kappa,p})+\sum_{r\in[1,L]\setminus \kappa} \mathbf{F}_{\mathbf{B}^{(r)}}(z_{r}),
	\end{align}
	where $[\mathbf q^{(\kappa)}_p]_r$ denotes the $r$th element of the query vector $\mathbf q^{(\kappa)}_p$. 
	
	After encoding the library, each worker $p$ computes the matrix product $\mathbf C_p^{(\kappa)}=\mathbf A_p^{(\kappa)} \mathbf B_p^{(\kappa)}$ and then sends $\mathbf C _p^{(\kappa)}$ back to the master server.
	We note that both polynomials $\mathbf{F}_{\mathbf{A}}(z)$ and $\mathbf{F}_{\mathbf{B}^{(\kappa)}}(z)$, assigned to the input matrix $\mathbf A$ and the desired matrix $\mathbf B^{(\kappa)}$, are evaluated at the same random points $z_{\kappa,1},\ldots,z_{\kappa,P}$ for workers $1,\ldots,P$, respectively. Since each undesired matrix is evaluated at an identical random point for all workers the second term in \eqref{eqB}, i.e., $\sum_{r\in[1,L]\setminus \kappa} \mathbf F_{\mathbf B^{(r)}}(z_r)$, can be considered as a constant term. 
	
	To reconstruct all blocks $\mathbf C_{i,l}^{(\kappa)}$ of the product matrix $\mathbf C^{(\kappa)}=\mathbf {AB}^{(\kappa)}$, the master server carries out polynomial interpolation, upon receiving a number of
	multiplication results equal to at least $\deg(\mathbf F_{\mathbf A}(z)\mathbf G_{\mathbf B^{(\kappa)}}(z)) + 1$, which is $s(t+1)d$, for the case $s<t$.

	Similarly, for the $s\geq t$ case, the polynomial encoding function for the input matrix $\mathbf A$ as in \eqref{eqsec2_pAM1li} for $P_C=1$, that is, 
	\begin{align}
	\mathbf{F}_{\mathbf{A}}(z)&
	=\sum_{i=1}^{t}\sum_{j=1}^{s} \mathbf{A}_{i,j}z^{ i-1+t(j-1)}+\mathbf{R}z^{ts} ,\label{eqaa}
	\end{align}
	and the encoding function for matrices $\mathbf B^{(r)}$ is given as in  \eqref{eqsec2_pBM1li} for $P_C=0$, that is
	\begin{align}
	\mathbf{F}_{\mathbf{B}^{(r)}}(z)&
	= \sum_{k=1}^{s}\sum_{l=1}^{d} \mathbf{B}_{k,l}^{(r)}z^{(s-k)t+ts(l-1)}.
	\end{align}
	The encoded matrices $\mathbf A_p^{(\kappa)}$ and $\mathbf B_p^{(\kappa)}$ are defined as above, and so are the query vectors $\mathbf q_p^{(\kappa)}$ for all $p\in[1,P]$.
	
	The security of the data matrix $\mathbf{A}$ against non-colluding workers is guaranteed by appending the random matrix $\mathbf{R}$ to the input matrix $\mathbf{A}$ in \eqref{eqa} in the same way as described in Section \ref{sec_SecPDC}. The details for both cases $s<t$ and $s\geq t$ are given in the proofs of Theorems \ref{SecurThm} and \ref{SecurThm2}, respectively, for the case of $P_C=1$. The privacy condition of \eqref{eq_privacy} follows by definition of the query vectors \eqref{eq_query} for the desired index ${\kappa}\in [1,L]$, as proved in \cite{kim2019private}.
	Finally, the recovery threshold and the communication load follow in a manner analogous to Theorems \ref{SecurThm} and \ref{SecurThm2}.


  \ifCLASSOPTIONcaptionsoff
 	\newpage
 	\fi


 	
 	%
 	
 	
 	 \balance
 	\bibliographystyle{IEEEtran} 
 	\bibliography{IEEEabrv,references} 

\begin{thebibliography}{10}
\providecommand{\url}[1]{#1}
\csname url@samestyle\endcsname
\providecommand{\newblock}{\relax}
\providecommand{\bibinfo}[2]{#2}
\providecommand{\BIBentrySTDinterwordspacing}{\spaceskip=0pt\relax}
\providecommand{\BIBentryALTinterwordstretchfactor}{4}
\providecommand{\BIBentryALTinterwordspacing}{\spaceskip=\fontdimen2\font plus
\BIBentryALTinterwordstretchfactor\fontdimen3\font minus
  \fontdimen4\font\relax}
\providecommand{\BIBforeignlanguage}[2]{{%
\expandafter\ifx\csname l@#1\endcsname\relax
\typeout{** WARNING: IEEEtran.bst: No hyphenation pattern has been}%
\typeout{** loaded for the language `#1'. Using the pattern for}%
\typeout{** the default language instead.}%
\else
\language=\csname l@#1\endcsname
\fi
#2}}
\providecommand{\BIBdecl}{\relax}
\BIBdecl

\bibitem{aliasgari2019distributed}
M.~Aliasgari, O.~Simeone, and J.~Kliewer, ``Distributed and private coded
  matrix computation with flexible communication load,'' in \emph{Proc. IEEE
  Intern. Symp. Inform. Theory (ISIT)}, Jul. 2019, pp. 1092--1096.

\bibitem{janzamin2015beating}
M.~Janzamin, H.~Sedghi, and A.~Anandkumar, ``Beating the perils of
  non-convexity: Guaranteed training of neural networks using tensor methods,''
  \emph{arXiv preprint, arXiv:1506.08473}, 2015.

\bibitem{li2014scaling}
M.~Li, D.~G. Andersen, J.~W. Park, A.~J. Smola, A.~Ahmed, V.~Josifovski,
  J.~Long, E.~J. Shekita, and B.-Y. Su, ``Scaling distributed machine learning
  with the parameter server.'' in \emph{Proc. of the 11th {USENIX} Conference
  on Operating Systems Design and Implementation, OSDI}, vol.~14, Oct. 2014,
  pp. 583--598.

\bibitem{dean2013tail}
J.~Dean and L.~A. Barroso, ``The tail at scale,'' \emph{Communications of the
  ACM}, vol.~56, no.~2, pp. 74--80, Feb. 2013.

\bibitem{joshi2017efficient}
G.~Joshi, E.~Soljanin, and G.~Wornell, ``Efficient redundancy techniques for
  latency reduction in cloud systems,'' \emph{ACM Transactions on Modeling and
  Performance Evaluation of Computing Systems (TOMPECS)}, vol.~2, no.~2, pp.
  12:1--12:30, Apr. 2017.

\bibitem{wang2015using}
D.~Wang, G.~Joshi, and G.~Wornell, ``Using straggler replication to reduce
  latency in large-scale parallel computing,'' \emph{ACM SIGMETRICS Performance
  Evaluation Review}, vol.~43, no.~3, pp. 7--11, Dec. 2015.

\bibitem{huang1984algorithm}
K.-H. Huang and J.~A. Abraham, ``Algorithm-based fault tolerance for matrix
  operations,'' \emph{IEEE Trans. on Computers}, vol. 100, no.~6, pp. 518--528,
  Jun. 1984.

\bibitem{lee2018speeding}
K.~Lee, M.~Lam, R.~Pedarsani, D.~Papailiopoulos, and K.~Ramchandran, ``Speeding
  up distributed machine learning using codes,'' \emph{IEEE Trans. on Inform.
  Theory}, vol.~64, no.~3, pp. 1514--1529, Aug. 2017.

\bibitem{yu2017polynomial}
Q.~Yu, M.~Maddah-Ali, and S.~Avestimehr, ``Polynomial codes: an optimal design
  for high-dimensional coded matrix multiplication,'' in \emph{Proc. Advances
  in Neural Inform. Processing Systems}, Dec. 2017, pp. 4403--4413.

\bibitem{li2018fundamental}
S.~Li, M.~A. Maddah-Ali, Q.~Yu, and A.~S. Avestimehr, ``A fundamental tradeoff
  between computation and communication in distributed computing,'' \emph{IEEE
  Trans. on Inform. Theory}, vol.~64, no.~1, pp. 109--128, Sep. 2017.

\bibitem{aliasgari2018coded}
M.~Aliasgari, J.~Kliewer, and O.~Simeone, ``Coded computation against
  processing delays for virtualized cloud-based channel decoding,'' \emph{IEEE
  Trans. on Commun.}, vol.~67, no.~1, pp. 28--38, Jan. 2019.

\bibitem{aliasgari2018codedisit}
------, ``Coded computation against straggling decoders for network function
  virtualization,'' in \emph{Proc. IEEE Intern. Symp. Inform. Theory (ISIT)},
  Jun. 2018, pp. 711--715.

\bibitem{dutta2018optimal}
S.~Dutta, M.~Fahim, F.~Haddadpour, H.~Jeong, V.~Cadambe, and P.~Grover, ``On
  the optimal recovery threshold of coded matrix multiplication,'' \emph{arXiv
  preprint, arXiv:1801.10292}, 2018.

\bibitem{dutta2018unified}
S.~Dutta, Z.~Bai, H.~Jeong, T.~M. Low, and P.~Grover, ``A unified coded deep
  neural network training strategy based on generalized polydot codes for
  matrix multiplication,'' \emph{arXiv preprint, arXiv:1811.10751}, 2018.

\bibitem{fahim2017optimal}
M.~Fahim, H.~Jeong, F.~Haddadpour, S.~Dutta, V.~Cadambe, and P.~Grover, ``On
  the optimal recovery threshold of coded matrix multiplication,'' in
  \emph{Proc. 55th Allerton Conf. Commun., Control, Comput., IL, USA}, Oct.
  2017, pp. 1264--1270.

\bibitem{fahim2019numerically}
M.~Fahim and V.~R. Cadambe, ``Numerically stable polynomially coded
  computing,'' \emph{arXiv preprint, arXiv:1903.08326}, 2019.

\bibitem{subramaniam2019random}
A.~M. Subramaniam, A.~Heidarzadeh, and K.~R. Narayanan, ``Random
  khatri-rao-product codes for numerically-stable distributed matrix
  multiplication,'' \emph{arXiv preprint, arXiv:1907.05965}, 2019.

\bibitem{nodehi2018limited}
H.~A. Nodehi and M.~A. Maddah-Ali, ``Limited-sharing multi-party computation
  for massive matrix operations,'' in \emph{Proc. IEEE Intern. Symp. on Inform.
  Theory (ISIT)}, Jun. 2018, pp. 1231--1235.

\bibitem{yu2018lagrange}
Q.~Yu, N.~Raviv, J.~So, and A.~S. Avestimehr, ``Lagrange coded computing:
  Optimal design for resiliency, security and privacy,'' \emph{arXiv preprint,
  arXiv:1806.00939}, 2018.

\bibitem{chang2018capacity}
W.-T. Chang and R.~Tandon, ``On the capacity of secure distributed matrix
  multiplication,'' \emph{arXiv preprint, arXiv:1806.00469}, 2018.

\bibitem{kakar2019capacity}
J.~Kakar, S.~Ebadifar, and A.~Sezgin, ``On the capacity and
  straggler-robustness of distributed secure matrix multiplication,''
  \emph{IEEE Access}, vol.~7, pp. 45\,783--45\,799, Apr. 2019.

\bibitem{yang2019secure}
H.~Yang and J.~Lee, ``Secure distributed computing with straggling servers
  using polynomial codes,'' \emph{IEEE Trans. on Inform. Forensics and Secur.},
  vol.~14, no.~1, pp. 141--150, Jan. 2019.

\bibitem{rafael2018codes}
R.~G. D'Oliveira, S.~E. Rouayheb, and D.~Karpuk, ``{GASP} codes for secure
  distributed matrix multiplication,'' \emph{arXiv preprint, arXiv:1812.09962},
  2018.

\bibitem{das2019random}
A.~B. Das, A.~Ramamoorthy, and N.~Vaswani, ``Random convolutional coding for
  robust and straggler resilient distributed matrix computation,'' \emph{arXiv
  preprint, arXiv:1907.08064}, 2019.

\bibitem{nodehi2019secure}
H.~A. Nodehi and M.~A. Maddah-Ali, ``Secure coded multi-party computation for
  massive matrix operations,'' \emph{arXiv preprint, arXiv:1908.04255}, 2019.

\bibitem{ricci2011introduction}
F.~Ricci, L.~Rokach, and B.~Shapira, \emph{Introduction to recommender systems
  handbook}.\hskip 1em plus 0.5em minus 0.4em\relax Springer, 2011.

\bibitem{lee2017high}
K.~Lee, C.~Suh, and K.~Ramchandran, ``High-dimensional coded matrix
  multiplication,'' in \emph{Proc. IEEE Intern. Symp. Inform. Theory (ISIT)},
  Jun. 2017, pp. 2418--2422.

\bibitem{yu2018straggler}
Q.~Yu, M.~A. Maddah-Ali, and A.~S. Avestimehr, ``Straggler mitigation in
  distributed matrix multiplication: Fundamental limits and optimal coding,''
  \emph{arXiv preprint, arXiv:1801.07487}, 2018.

\bibitem{jia2019capacity}
Z.~Jia and S.~A. Jafar, ``On the capacity of secure distributed matrix
  multiplication,'' \emph{arXiv preprint, arXiv:1908.06957}, 2019.

\bibitem{chor1995private}
B.~Chor, O.~Goldreich, E.~Kushilevitz, and M.~Sudan, ``Private information
  retrieval,'' in \emph{Proceedings of IEEE 36th Annual Foundations of Computer
  Science}.\hskip 1em plus 0.5em minus 0.4em\relax IEEE, 1995, pp. 41--50.

\bibitem{gasarch2004survey}
W.~Gasarch, ``A survey on private information retrieval,'' \emph{Bulletin of
  the EATCS}, vol.~82, no. 113, pp. 72--107, Feb. 2004.

\bibitem{yekhanin2010private}
S.~Yekhanin, ``Private information retrieval,'' \emph{Commun. {ACM}}, vol.~53,
  no.~4, pp. 68--73, Apr. 2010.

\bibitem{sun2017capacity}
H.~Sun and S.~A. Jafar, ``The capacity of private information retrieval,''
  \emph{IEEE Trans. on Inform. Theory}, vol.~63, no.~7, pp. 4075--4088, Jul.
  2017.

\bibitem{banawan2018capacity}
K.~Banawan and S.~Ulukus, ``The capacity of private information retrieval from
  coded databases,'' \emph{IEEE Trans. on Inform. Theory}, vol.~64, no.~3, pp.
  1945--1956, Mar. 2018.

\bibitem{Hollanti_etal_2017}
R.~Freij-Hollanti, O.~W. Gnilke, C.~Hollanti, and D.~A. Karpuk, ``Private
  information retrieval from coded databases with colluding servers,''
  \emph{SIAM J. Appl. Algebra Geom.}, vol.~1, no.~1, pp. 647--664, Nov. 2017.

\bibitem{kazemi2019single}
F.~Kazemi, E.~Karimi, A.~Heidarzadeh, and A.~Sprintson, ``Single-server
  single-message online private information retrieval with side information,''
  in \emph{Proc. IEEE Intern. Symp. Inform. Theory (ISIT)}, Jul. 2019, pp.
  350--354.

\bibitem{kazemi2019private}
------, ``Private information retrieval with private coded side information:
  The multi-server case,'' \emph{arXiv preprint, arXiv:1906.11278}, 2019.

\bibitem{kim2019private}
M.~Kim and J.~Lee, ``Private secure coded computation,'' \emph{arXiv preprint,
  arXiv:1902.00167}, 2019.

\bibitem{chang2019upload}
W.-T. Chang and R.~Tandon, ``On the upload versus download cost for secure and
  private matrix multiplication,'' \emph{arXiv preprint, arXiv:1906.10684},
  2019.

\bibitem{tahmasebi2019private}
B.~Tahmasebi and M.~A. Maddah-Ali, ``Private sequential function computation,''
  \emph{arXiv preprint, arXiv:1908.01204}, 2019.

\bibitem{shamir1979share}
A.~Shamir, ``How to share a secret,'' \emph{Communications of the {ACM}},
  vol.~22, no.~11, pp. 612--613, Nov. 1979.

\bibitem{kung2009fast}
H.-T. Kung, \emph{Fast evaluation and interpolation}.\hskip 1em plus 0.5em
  minus 0.4em\relax Carnegie Mellon University, Tech. Rep., 2009.

\bibitem{lee2017speeding}
K.~Lee, M.~Lam, R.~Pedarsani, D.~Papailiopoulos, and K.~Ramchandran, ``Speeding
  up distributed machine learning using codes,'' \emph{IEEE Trans. on Inform.
  Theory}, vol.~64, no.~3, pp. 1514--1529, Mar. 2017.

\bibitem{ross2014introduction}
S.~M. Ross, \emph{Introduction to Probability Models}.\hskip 1em plus 0.5em
  minus 0.4em\relax Academic {P}ress, 2014.

\bibitem{ha2018wireless}
S.~Ha, J.~Zhang, O.~Simeone, and J.~Kang, ``Coded federated computing in
  wireless networks with straggling devices and imperfect {CSI},'' in
  \emph{Proc. IEEE Intern. Symp. Inform. Theory (ISIT)}, Jul. 2019, pp.
  2649--2653.

\end{thebibliography}

 \end{document}